\documentclass[english,b4paper,12pt]{amsart}

\usepackage[margin=2cm]{geometry}

%
\usepackage{cite}

\usepackage{times}
\usepackage{euler}

%
   \usepackage[pdftex]{graphicx}
   \DeclareGraphicsExtensions{.pdf,.jpeg,.png}
%
%

%
\usepackage{amssymb,amsthm,stmaryrd}
\usepackage{amsmath}
%

\theoremstyle{plain}
\newtheorem{theorem}{Theorem}
\newtheorem*{theorem*}{Theorem}
\newtheorem{proposition}[theorem]{Proposition}
\newtheorem{corollary}[theorem]{Corollary}
\newtheorem*{corollary*}{Corollary}
\newtheorem{lemma}[theorem]{Lemma}

\theoremstyle{remark}
\newtheorem{remark}[theorem]{Remark}
\newtheorem*{remark*}{Remark}
\newtheorem{example}[theorem]{Example}

\theoremstyle{definition}
\newtheorem{definition}[theorem]{Definition}

%
\usepackage[noend]{algorithmic}
\usepackage{algorithm}

\usepackage{enumerate}
\usepackage{color}

\usepackage[matrix,arrow,curve]{xy}


\newcommand{\field}[1][]{\mathbb{F}_{#1}}
\newcommand{\bfield}{\field[]}

\DeclareMathOperator{\distance}{d}

\newcommand{\lclm}[1]{\left[#1\right]_\ell}
\newcommand{\lcrm}[1]{\left[#1\right]_r}
\newcommand{\gcrd}[1]{\left(#1\right)_r}

\begin{document}
%
\title{A Sugiyama-like decoding algorithm for convolutional codes}
%
%
%


\author{Jos\'{e} G\'{o}mez-Torrecillas,
        F.~J. Lobillo,
        Gabriel Navarro
}
\date{}
\thanks{Jos\'{e} G\'{o}mez-Torrecillas and F. J. Lobillo are with CITIC and Department of Algebra of University of Granada, Gabriel Navarro is with CITIC and Department of Computer Science and Artificial Intelligence of University of Granada.}

%



\maketitle

\begin{abstract}
We propose a decoding algorithm for a class of convolutional codes called skew BCH convolutional codes. These are convolutional codes of designed Hamming distance endowed with a cyclic structure yielding a left ideal of a non-commutative ring (a quotient of a skew polynomial ring). In this setting, right and left division algorithms exist, so our algorithm follows the guidelines of the Sugiyama's procedure for finding the error locator and error evaluator polynomials for BCH block codes.
\end{abstract}


\section{Introduction}
The main reason why cyclic block codes are useful is that it is possible to exploit the ring structure of their word-ambient space to get a better control of the parameters of the code, and to design efficient decoding algorithms. A classical example is the procedure developed by Sugiyama, Kasahara, Hirasawa and Namekawa \cite{Sugiyama} for nearest neighbor decoding of BCH codes.
Commonly known as Sugiyama Algorithm, it is a variation of the decoding scheme proposed by  Peterson \cite{Peterson} and, Gorenstein and Zierler \cite{Zierler}, which computes the error positions of a received polynomial by a clever use of the extended Euclidean algorithm. 

When dealing with convolutional codes, the Viterbi algorithm is, by far, the most often used for decoding convolutional codes over binary symmetric or additive white Gaussian noise channels. It makes use of the trellis structure of these codes in order to find the shortest path and return a maximum-likelihood estimation by means of hard and soft decission schemes. Tom\'{a}s, Rosenthal and Smarandache \cite{TRS11} use large finite windows in the infinite sliding generating matrix associated to convolutional codes to design a decoding algorithm over the erasure channel. It is known that endowing convolutional codes with a cyclic structure requires of a non-commutative multiplication \cite{Piret}. However, the different proposals of cyclic convolutional codes in the literature seem to have failed to take advantage of their algebraic structure for finding efficient and practical decoding algorithms, aiming to provide an alternative to the Viterbi algorithm. This is probably due to the fact that the non-commutative polynomial rings used in this classical approach \cite{Piret,Roos,Heide} are more complicated than expected. In particular, no Euclidean division algorithm is available here.   In \cite{GLNSCCC} we proposed a simpler approach that follows the idea of Piret \cite{Piret} and Roos \cite{Roos} of using a non-commutative multiplication, but implements it with a different algebraic construction. Thus, a skew cyclic convolutional code (SCCC) becomes a left ideal, whose generator is expressed in a adequate way,  of a suitable factor ring of a skew polynomial ring with coefficients in a rational function field.

This paper is the natural continuation of \cite{GLNSCCC}. We mainly attempt to show more solid evidences of the great potential of this notion, and provide a decoding algorithm for a class of SCCCs. Hopefully, this could lay the foundations of a practical alternative of the Viterbi algorithm. By analogy with BCH codes, we call these codes skew BCH convolutional codes. The similarities with the block case allow us to design a Sugiyama-like algorithm for decoding them.

The paper is organized as follows. Section \ref{BCH} is devoted to fix the algebraic setting and notation we will use.  We define skew convolutional BCH codes making use of the construction developed in \cite[Section IV]{GLNSCCC}. Our presentation of the generators of these codes as least common left multiples of sets of linear skew polynomials allows us to   prove that skew BCH convolutional codes are MDS with respect to the Hamming distance and provide bounds for their free distance. In Section \ref{sugiyama} we define the error locator and error evaluator polynomials, and we prove that they satisfy a non-commutative key equation. Since, in this setting, left and right division algorithms are available, we solve the key equation making use of the Right Extended Euclidean Algorithm (REEA). Unlike the classical block case, for less errors than the error-correcting capacity of the code, our method can fail solving the key equation. In Section \ref{failures} we shall prove that the theoretical probability of a key equation failure is zero, or, in practice, that it tends to zero as the maximum degree of the coefficients goes towards to infinity. In despite of this, we shall also give a subsidiary procedure that can be executed whenever the REEA fails to solve the key equation, which outputs the error locator and error evaluator polynomials. Finally, in order to make the paper more self-contained and, for the sake of readers non-familiar with skew polynomial rings, we have added an Appendix containing basic information about them. We have also moved there some technical results for making the paper more readable.

All along the paper, the theory is illustrated by examples. These have been implemented and computed with the aid of mathematical software SageMath \cite{sage}.

\section{Skew BCH convolutional codes}\label{BCH}

Let us first fix the notation we shall use throughout the paper. Let $\mathbb{F}=\mathbb{F}_q$ be the field with $q$ elements, where $q$ is a power of a prime number, and  $\mathbb{F}(t)$ denote the field of rational functions over $\mathbb{F}$, i.e. the field of fractions of the polynomial ring $\mathbb{F}[t]$. Consider an $\mathbb{F}$--algebra automorphism $\sigma$ of $\mathbb{F}(t)$ of (necesarily) finite order \(n\). For brevity, we denote by $R$ the (non-commutative) ring of skew polynomials $\mathbb{F}(t)[x;\sigma]$, that is, the $\mathbb{F}(t)$-vector space of standard (commutative) polynomials whose product is skewed by the rule $x\gamma=\sigma(\gamma)x$ for any $\gamma \in \mathbb{F}(t)$,  see the Appendix for details on this ring. The polynomial $x^n-1$ is central in $R$, and,  therefore, the left ideal of $R$ it generates is two-sided, so we may consider the quotient  ring \(\mathcal{R} = \bfield(t)[x;\sigma] / \langle x^n-1 \rangle\), which is isomorphic, as an $\mathbb{F}(t)^\sigma$--algebra, to the matrix ring $\mathcal{M}_n(\mathbb{F}(t)^\sigma)$ over the field of invariants  $\mathbb{F}(t)^\sigma$, see \cite[Theorem 1]{GLNSCCC}. A \emph{skew cyclic convolutional code} (SCCC) $\mathcal{C}$ is defined as a convolutional code whose preimage $\mathfrak{v}^{-1}(\mathcal{C})$ via the coordinate map $\mathfrak{v}:\mathcal{R}\to \mathbb{F}(t)^n$ is a left ideal of $\mathcal{R}$. Here, we are taking coordinates with respect to the basis $\{1, x, \dots, x^{n-1}\}$ (modulo $x^n-1$) of $\mathcal{R}$ considered as an $\mathbb{F}(t)$--vector space. For simplicity, unless otherwise stated, we shall  identify $\mathcal{C}$ with  $\mathfrak{v}^{-1}(\mathcal{C})$ and say that an SCCC is a left ideal of $\mathcal{R}$. A method for constructing SCCCs of fixed dimension is given in \cite[Section IV]{GLNSCCC}. Concretely, by the Normal Basis Theorem, we may choose an element $\alpha\in \mathbb{F}(t)$ such that $\{\alpha,\sigma(\alpha),\ldots , \sigma^{n-1}(\alpha)\}$ is a basis of $\mathbb{F}(t)$ as an $\mathbb{F}(t)^\sigma$-vector space. We may set then $\beta=\alpha^{-1}\sigma(\alpha)$, which satisfies the property
\begin{equation}\label{mainprop}\lclm{x-\beta,x-\sigma(\beta),\ldots ,x-\sigma^{n-1}(\beta)}=x^n-1,\end{equation}
where $\lclm{-}$ denotes the least common left multiple in $R$. Under these conditions, let \( \{i_1 < \dots < i_k\}\subset \{0,1,\ldots ,n-1\}\) be a set of indices, the code $\mathcal{C}$ generated as a left ideal by \(f = [x-\sigma^{i_1}(\beta), x-\sigma^{i_2}(\beta), \ldots ,x-\sigma^{i_k}(\beta)]_\ell\) is an SCCC of length $n$ and dimension $n-k$. Of course, when we say that a left ideal of $\mathcal{R}$ is generated by some $f \in R$, this generator has to be understood as the equivalence class of $f$ modulo $\langle x^n -1 \rangle$. We will keep this abuse of language all along the paper.

 The purpose of this section is to give a systematic method for constructing SCCCs of a designed Hamming distance. Due to the analogy with BCH block codes, we shall call them \emph{skew BCH convolutional codes}.  
The following technical result, which is a particular case of \cite[Corollary 4.13]{Lam/Leroy:1988}, is of importance in the sequel. We include an elementary proof.

\begin{lemma}\cite[Corollary 4.13]{Lam/Leroy:1988}\label{circulantlemma}
Let \(L\) be a field, $\sigma$ an automorphism of $L$ of finite order \(n\), and \(K = L^\sigma\) the invariant subfield under $\sigma$. Let \(\{\alpha_0, \dots, \alpha_{n-1}\}\) be a \(K\)--basis of \(L\). Then, for all \(t \leq n\) and every subset \(\{k_0 < k_1 < \dots < k_{t-1}\} \subseteq \{0, 1, \dots, n-1\}\),
\[
\begin{vmatrix}
\alpha_{k_0} & \alpha_{k_1} & \dots & \alpha_{k_{t-1}} \\
\sigma(\alpha_{k_0}) & \sigma(\alpha_{k_1}) & \dots & \sigma(\alpha_{k_{t-1}}) \\
\vdots & \vdots & \ddots & \vdots \\
\sigma^{t-1}(\alpha_{k_0}) & \sigma^{t-1}(\alpha_{k_1}) & \dots & \sigma^{t-1}(\alpha_{k_{t-1}})
\end{vmatrix} \neq 0.
\]
\end{lemma}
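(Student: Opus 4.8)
The plan is to prove the contrapositive: if such a determinant vanishes, then the columns $v_j = \transpose{(\alpha_{k_j}, \sigma(\alpha_{k_j}), \dots, \sigma^{t-1}(\alpha_{k_j}))}$ are linearly dependent over $L$, and I will show this forces a nontrivial $L$-linear relation among $\alpha_{k_0}, \dots, \alpha_{k_{t-1}}$ that descends to a nontrivial $K$-linear relation, contradicting that $\{\alpha_0, \dots, \alpha_{n-1}\}$ is a $K$-basis of $L$. The mechanism for the descent is a classical Frobenius/Artin-style argument: from any $L$-linear dependence one extracts one that is "short" and then applies $\sigma$ to it to produce another, and a suitable $L$-combination of the two kills the leading term, contradicting minimality unless all coefficients actually lie in $K$.

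Concretely, suppose $\sum_{j=0}^{t-1} c_j v_j = 0$ with not all $c_j \in L$ zero; among all such relations choose one with the fewest nonzero coefficients, say with nonzero coefficients indexed by a set $S$, and normalize so that one of them equals $1$. Reading off the first coordinate gives $\sum_{j \in S} c_j \alpha_{k_j} = 0$. Reading off coordinate $i$ (for $0 \le i \le t-1$) and using that $\sigma^i$ is a ring homomorphism gives $\sum_{j \in S} \sigma^i(c_j^{-1}\text{-free form}) \dots$ — more cleanly: the relation $\sum_{j \in S} c_j \sigma^i(\alpha_{k_j}) = 0$ for $i = 1, \dots, t-1$ together with $\sum_{j \in S} c_j \alpha_{k_j} = 0$ says exactly that the vector $(c_j)$ lies in the left kernel... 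I instead apply $\sigma$ directly to the scalar equation obtained from a row: from $\sum_{j\in S} c_j \sigma^i(\alpha_{k_j}) = 0$ (row $i$) apply $\sigma$ to get $\sum_{j\in S} \sigma(c_j)\sigma^{i+1}(\alpha_{k_j}) = 0$, valid for $0 \le i \le t-2$, i.e.\ rows $1$ through $t-1$ of the original system are satisfied by $(\sigma(c_j))_{j\in S}$. Subtracting the normalized original relation (which has a coefficient equal to $1$, fixed by $\sigma$) from its $\sigma$-image produces a relation among the same columns with strictly fewer nonzero entries; by minimality it must be the zero relation, forcing $\sigma(c_j) = c_j$ for all $j$, i.e.\ $c_j \in K$. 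But then $\sum_{j \in S} c_j \alpha_{k_j} = 0$ is a nontrivial $K$-linear relation among distinct members of the basis $\{\alpha_0, \dots, \alpha_{n-1}\}$, a contradiction.

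The one point that needs care — and the step I expect to be the main obstacle in writing it cleanly — is matching the $\sigma$-shifted relation against the original system on the right index range. The original dependence gives $t$ scalar equations (rows $0, \dots, t-1$), and applying $\sigma$ to rows $0,\dots,t-2$ yields equations that coincide with rows $1,\dots,t-1$ of the system but with $c_j$ replaced by $\sigma(c_j)$; so $(\sigma(c_j))$ satisfies the subsystem consisting of rows $1, \dots, t-1$, which is $t-1$ equations. To get a genuine contradiction with minimality I need $(\sigma(c_j)) - (c_j)$ to satisfy the \emph{full} system including row $0$; this does hold because row $0$ for $(\sigma(c_j))$ reads $\sum \sigma(c_j)\alpha_{k_j}$, which is $\sigma$ applied to $\sum c_j \sigma^{\,?}\dots$ — here one uses that row $0$ of the $\sigma$-image of the relation "for $(c_j)$ evaluated one row up" is available since $t-1 \ge 1$ ensures at least row $0$ and row $1$ exist, so the shift is well-defined; the normalization (a coefficient equal to $1$) guarantees the difference is not the trivial relation unless \emph{all} differences vanish. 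Once this bookkeeping is set up correctly, the rest is immediate from the basis hypothesis, and no computation beyond tracking indices is required.
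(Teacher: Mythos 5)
There is a genuine gap, and it is exactly the point you flag as ``the main obstacle'' without actually closing it. Applying $\sigma$ to rows $0,\dots,t-2$ of your minimal relation shows that the shifted vector $(\sigma(c_j))_{j\in S}$ satisfies rows $1,\dots,t-1$ of the system, but nothing forces it to satisfy row $0$: that would require $\sum_{j\in S}c_j\,\sigma^{-1}(\alpha_{k_j})=0$, a ``row $-1$'' which is not among your hypotheses. (In the classical Dedekind--Artin argument all $n$ powers of $\sigma$ occur as rows, so applying $\sigma$ merely permutes the rows cyclically; here $t\le n$ and there is no wrap-around.) Hence $(\sigma(c_j)-c_j)_{j\in S}$ is only known to annihilate the columns truncated to rows $1,\dots,t-1$; it need not be a relation of the full system, and minimality of $|S|$ among relations of the \emph{full} system therefore says nothing about it. Your parenthetical justification (``row $0$ \dots is available since $t-1\ge 1$ ensures at least row $0$ and row $1$ exist'') does not engage with this.

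The standard repair is to notice that rows $1,\dots,t-1$ of column $j$ are $\sigma$ applied to rows $0,\dots,t-2$, so the truncated system has the same shape as the statement with $t$ replaced by $t-1$; an induction on $t$ then forces the difference relation to vanish, whence $\sigma(c_j)=c_j$, i.e.\ $c_j\in K$, and row $0$ yields a nontrivial $K$-linear dependence among distinct basis elements, a contradiction. This is essentially what the paper does: it inducts on $t$, uses the induction hypothesis to write the last column as an $L$-combination of the first $t$, subtracts $\sigma^{-1}$ of each equation from the preceding one to obtain a homogeneous $t\times t$ system whose matrix is nonsingular by induction, and concludes the coefficients lie in $K$. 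Once the induction is in place your minimal-support device is superfluous; without the induction (or an equivalent handling of the truncated system) your argument does not go through.
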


\begin{proof}
We prove the statement by induction on \(t\). The case \(t = 1\) holds trivially. So assume that the lemma is satisfied for some \(t \geq 1\). We need to check that, for any \((t+1)\times(t+1)\)--matrix
\[
\Delta = \begin{pmatrix}
\alpha_{k_0} & \alpha_{k_1} & \dots & \alpha_{k_{t}} \\
\sigma(\alpha_{k_0}) & \sigma(\alpha_{k_1}) & \dots & \sigma(\alpha_{k_{t}}) \\
\vdots & \vdots & \ddots & \vdots \\
\sigma^{t}(\alpha_{k_0}) & \sigma^{t}(\alpha_{k_1}) & \dots & \sigma^{t}(\alpha_{k_{t}})
\end{pmatrix},
\]
the determinant \(|\Delta|\) is non zero. Suppose, contrary to this, that \(|\Delta | = 0\). By the induction hypothesis, the first \(t\) columns of \(\Delta\) are linearly independent, so there exist \(a_0, \dots, a_{t-1} \in L \) such that the last column
\[
(\alpha_{k_{t}}, \sigma(\alpha_{k_{t}}), \dots, \sigma^{t}(\alpha_{k_{t}})) = \sum_{j=0}^{t-1} a_j (\alpha_{k_{j}}, \sigma(\alpha_{k_{j}}), \dots, \sigma^{t}(\alpha_{k_{j}})).
\]
That is, \(a_0, \dots, a_{t-1}\) satisfy the linear system
\begin{equation}\label{S}
\left\{\begin{aligned}
\alpha_{k_t} &= a_0 \alpha_{k_0} + a_1 \alpha_{k_1} + \dots + a_{t-1} \alpha_{k_{t-1}} \\
\sigma(\alpha_{k_t}) &= a_0 \sigma(\alpha_{k_0}) + a_1 \sigma(\alpha_{k_1}) + \dots + a_{t-1} \sigma(\alpha_{k_{t-1}}) \\
& \vdots \\
\sigma^{t}(\alpha_{k_t}) &= a_0 \sigma^{t}(\alpha_{k_0}) + a_1 \sigma^{t}(\alpha_{k_1}) + \dots + a_{t-1} \sigma^{t}(\alpha_{k_{t-1}}).
\end{aligned}\right.
\end{equation}
For any \(j = 0, \dots, t-1\), we subtract in \eqref{S} the equation \(j+1\) transformed by \(\sigma^{-1}\)  from  the equation $j$. This yields the homogeneous linear system
\begin{equation}\label{Sprime}
\left\{\begin{aligned}
0 &= (a_0 - \sigma^{-1}(a_0)) \alpha_{k_0} + (a_1 - \sigma^{-1}(a_1)) \alpha_{k_1} + \dots + (a_{t-1} - \sigma^{-1}(a_{t-1})) \alpha_{k_{t-1}} \\
0 &= (a_0 - \sigma^{-1}(a_0)) \sigma(\alpha_{k_0}) + (a_1 - \sigma^{-1}(a_1)) \sigma(\alpha_{k_1}) + \dots + (a_{t-1} - \sigma^{-1}(a_{t-1})) \sigma(\alpha_{k_{t-1}}) \\
& \vdots \\
0 &= (a_0 - \sigma^{-1}(a_0)) \sigma^{t-1}(\alpha_{k_0}) + (a_1 - \sigma^{-1}(a_1)) \sigma^{t-1}(\alpha_{k_1}) + \dots + (a_{t-1} - \sigma^{-1}(a_{t-1})) \sigma^{t-1}(\alpha_{k_{t-1}}).
\end{aligned}\right.
\end{equation}
The coefficient matrix of \eqref{Sprime} is non singular, by the induction hypothesis, so, for all \(j = 0, \dots, t-1\), \(a_j - \sigma^{-1}(a_j) = 0\), and hence \(a_0, \dots, a_{t-1} \in K\). Consequently, the first equation of \eqref{S} provides a linear dependence over \(K\) of the $K$-basis \(\{\alpha_0, \dots, \alpha_{n-1}\}\), a contradiction. Thus \(|\Delta | \neq 0\) and the lemma is proved. 
\end{proof}

We recall the reader, see for instance \cite[pp. 310]{Lam/Leroy:1988}, that, for any \(\gamma \in \bfield(t)\), the \emph{$j$-th norm} of $\gamma$ is defined to be $$N_j(\gamma) = \gamma \sigma(\gamma) \dots \sigma^{j-1}(\gamma).$$ By Lemma \ref{eval} in the Appendix, the remainder of the left division of a polynomial \(g = \sum_{i=0}^{r} g_i x^i\) by \(x-\gamma\) is \(\sum_{i=0}^r g_i N_i(\gamma)\). Whenever $x-\gamma$ right divides $g$, we shall say that $\gamma$ is a \emph{right root} of $g$. The notion of $j$-norm also admits a version for negative numbers given by
$$N_{-j}(\gamma)=a\sigma^{-1}(\gamma)\cdots \sigma^{-j+1}(\gamma).$$
Then, the remainder of the right division of a polynomial \(g = \sum_{i=0}^{r} g_i x^i\) by \(x-\gamma\) can be written as \(\sum_{i=0}^r \sigma^{-i}(g_i) N_{-i}(\gamma)\). Whenever $x-\gamma$ left divides $g$ we say that $\gamma$ is a \emph{left root} of $g$. For any integers $i$ and $j$, $N_i(\sigma^j(\gamma))=\sigma^j(N_i(\gamma))$, see Lemma \ref{eval} in the Appendix.

\begin{lemma}\label{degreelclm}
Let $\alpha\in \mathbb{F}(t)$ such that $\{\alpha,\sigma(\alpha),\ldots , \sigma^{n-1}(\alpha)\}$ is a basis of $\mathbb{F}(t)$ as an $\mathbb{F}(t)^\sigma$-vector space. Set $\beta=\alpha^{-1}\sigma(\alpha)$.  For any subset $T=\{t_1<t_2<\cdots <t_m\}\subseteq \{0,1,\ldots ,n-1\}$,  the polynomials $$g^{\ell}=\lclm{x-\sigma^{t_1}(\beta),x-\sigma^{t_2}(\beta),\ldots ,x-\sigma^{t_m}(\beta)} \text{ and } g^{r}=\lcrm{x-\sigma^{t_1}(\beta^{-1}),x-\sigma^{t_2}(\beta^{-1}),\ldots ,x-\sigma^{t_m}(\beta^{-1})}$$ have degree $m$. Consequently, if $x-\sigma^{s}(\beta)\mid_{r} g^{\ell}$ or $x-\sigma^{s}(\beta^{-1})\mid_{\ell} g^{r}$, then $s\in T$.
\end{lemma}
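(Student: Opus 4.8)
The plan is to prove the degree claim for $g^{\ell}$ in full; the case of $g^{r}$ is entirely parallel, and the final (``Consequently'') assertion then follows formally. The inequality $\deg g^{\ell}\le m$ is standard: the least common left multiple of $m$ polynomials of degree $1$ has degree at most $m$, because iterating two-term lclm's produces a common left multiple of that degree (using $\deg\lclm{p,q}+\deg\gcrd{p,q}=\deg p+\deg q$); similarly $\deg g^{r}\le m$. So everything reduces to the two reverse inequalities.

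The key observation is that the norms attached to $\beta$ telescope. Since $\beta=\alpha^{-1}\sigma(\alpha)$ we have $\sigma^{k}(\beta)=\sigma^{k}(\alpha)^{-1}\sigma^{k+1}(\alpha)$, hence
\[
N_{i}(\beta)=\beta\,\sigma(\beta)\cdots\sigma^{i-1}(\beta)=\alpha^{-1}\sigma^{i}(\alpha),
\]
and, using $N_{i}(\sigma^{j}(\gamma))=\sigma^{j}(N_{i}(\gamma))$, that $N_{i}(\sigma^{t_{j}}(\beta))=\sigma^{t_{j}}(\alpha)^{-1}\sigma^{t_{j}+i}(\alpha)$. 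Now suppose, for a contradiction, that $d:=\deg g^{\ell}<m$, and write $g^{\ell}=\sum_{i=0}^{d}g_{i}x^{i}$ with $g_{d}\ne0$. As $g^{\ell}$ is right divisible by each $x-\sigma^{t_{j}}(\beta)$, each $\sigma^{t_{j}}(\beta)$ is a right root of $g^{\ell}$, so by the remainder formula for left division recalled above $\sum_{i=0}^{d}g_{i}N_{i}(\sigma^{t_{j}}(\beta))=0$; multiplying by $\sigma^{t_{j}}(\alpha)$ and using that $\mathbb{F}(t)$ is commutative this becomes
\[
\sum_{i=0}^{d}g_{i}\,\sigma^{t_{j}+i}(\alpha)=0\qquad(j=1,\dots,m).
\]
Keeping the equations $j=1,\dots,d+1$ (possible since $d+1\le m\le n$), the coefficient matrix of this homogeneous system is the transpose of the matrix $\bigl(\sigma^{i}(\sigma^{t_{j}}(\alpha))\bigr)_{0\le i\le d,\,1\le j\le d+1}$ of Lemma \ref{circulantlemma}, for the $\mathbb{F}(t)^{\sigma}$-basis $\{\alpha,\sigma(\alpha),\dots,\sigma^{n-1}(\alpha)\}$ of $\mathbb{F}(t)$ and the index set $\{t_{1}<\dots<t_{d+1}\}\subseteq\{0,\dots,n-1\}$; that lemma makes it non-singular, so $(g_{0},\dots,g_{d})=0$, contradicting $g_{d}\ne0$. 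Hence $\deg g^{\ell}=m$.

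For $g^{r}$ one argues the same way with the right-division remainder and left roots. From $\beta^{-1}=\sigma(\alpha)^{-1}\alpha$ one computes, again by telescoping, $N_{-i}(\beta^{-1})=\sigma(\alpha)^{-1}\sigma^{1-i}(\alpha)$, so that left divisibility of $g^{r}=\sum_{i}h_{i}x^{i}$ by $x-\sigma^{t_{j}}(\beta^{-1})$ amounts to $\sum_{i=0}^{d}\sigma^{-i}(h_{i})\,\sigma^{t_{j}+1-i}(\alpha)=0$ for all $j$. The substitution $i\mapsto d-i$ turns this into $\sum_{i=0}^{d}e_{i}\,\sigma^{u_{j}+i}(\alpha)=0$ with $u_{j}=t_{j}+1-d$ and $e_{0}=\sigma^{-d}(h_{d})$; as $j\mapsto u_{j}\bmod n$ is injective, the $\sigma^{u_{j}}(\alpha)$ are again $d+1$ distinct basis vectors, and taking $d+1$ of the equations yields once more a matrix of the Lemma \ref{circulantlemma} type (up to reordering columns), hence non-singular. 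This forces $e_{0}=0$, i.e. $h_{d}=0$, a contradiction; so $\deg g^{r}=m$.

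Finally, suppose $x-\sigma^{s}(\beta)\mid_{r}g^{\ell}$; we may assume $s\in\{0,\dots,n-1\}$. If $s\notin T$ (which forces $m<n$), then $g^{\ell}$ is a common left multiple of the $m+1$ polynomials $x-\sigma^{t}(\beta)$, $t\in T\cup\{s\}$, so it is right divisible by their lclm, which by the first part (applied to $T\cup\{s\}$, of size $m+1\le n$) has degree $m+1$ --- contradicting $\deg g^{\ell}=m$. Thus $s\in T$, and the case $x-\sigma^{s}(\beta^{-1})\mid_{\ell}g^{r}$ is handled identically using lcrm's and left divisibility. The step I expect to require the most care is the norm bookkeeping: getting the telescoped values of $N_{i}(\sigma^{t_{j}}(\beta))$ and $N_{-i}(\sigma^{t_{j}}(\beta^{-1}))$ exactly right and, for $g^{r}$, organizing the index shift so that the resulting linear system genuinely matches the setting of Lemma \ref{circulantlemma}; once the system is written down, that lemma closes the argument at once.
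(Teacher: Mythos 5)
Your proof is correct and follows essentially the same route as the paper's: compute the telescoping norms $N_i(\sigma^{t_j}(\beta))=\sigma^{t_j}(\alpha)^{-1}\sigma^{t_j+i}(\alpha)$, turn divisibility into a homogeneous linear system in the coefficients, and invoke Lemma \ref{circulantlemma} to force the trivial solution. You additionally make explicit the upper bound $\deg g^{\ell}\leq m$, the index shift reducing the $N_{-i}$ case to positive powers of $\sigma$, and the degree-count argument for the ``Consequently'' clause, all of which the paper leaves implicit; these are sound elaborations rather than a different approach.
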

\begin{proof}
Let us suppose that $\deg g^{\ell}<m$, so that  \(g^{\ell} = \sum_{i=0}^{m-1}g_i x^i\). Since \(g\) is a left multiple of \(x - \sigma^{t_j}(\beta)\) for any \(1 \leq j \leq m\), it follows from Lemma \ref{eval}\,$i)$ in the Appendix that
\begin{equation}\label{linsystNN}
\sum_{i=0}^{m-1} g_i N_i(\sigma^{t_j}(\beta)) = 0 \text{ for any } 1 \leq j \leq m.
\end{equation}
This is a homogeneous linear system whose coefficient matrix is the transpose of
\[
M = \begin{pmatrix}
N_0(\sigma^{t_1}(\beta)) & N_0(\sigma^{t_2}(\beta)) & \dots & N_0(\sigma^{t_m}(\beta)) \\
N_1(\sigma^{t_1}(\beta)) & N_1(\sigma^{t_2}(\beta)) & \dots & N_1(\sigma^{t_m}(\beta)) \\
N_2(\sigma^{t_1}(\beta)) & N_2(\sigma^{t_2}(\beta)) & \dots & N_2(\sigma^{t_m}(\beta)) \\
\vdots & \vdots & \ddots & \vdots \\
N_{m-1}(\sigma^{t_1}(\beta)) & N_{m-1}(\sigma^{t_2}(\beta)) & \dots & N_{m-1}(\sigma^{t_m}(\beta)) \\
\end{pmatrix}.
\]Note that $N_i(\sigma^{t_j}(\beta))=\sigma^{t_j}(N_i(\beta))=\sigma^{t_j}(\alpha^{-1})\sigma^{t_j+i}(\alpha))$ for any $1 \leq j \leq m$ and $0 \leq i \leq m-1$. Thus, $|M|=0$ if and only if the determinant of the matrix 
$$M'=\begin{pmatrix}
\sigma^{t_1}(\alpha) & \sigma^{t_2}(\alpha) & \dots & \sigma^{t_m}(\alpha)) \\
\sigma^{t_1+1}(\alpha) & \sigma^{t_2+1}(\alpha) & \dots & \sigma^{t_m+1}(\alpha)) \\
\sigma^{t_1+2}(\alpha) & \sigma^{t_2+2}(\alpha) & \dots & \sigma^{t_m+2}(\alpha)) \\
\vdots & \vdots & \ddots & \vdots \\
\sigma^{t_1+m-1}(\alpha) & \sigma^{t_2m-1}(\alpha) & \dots & \sigma^{t_m+m-1}(\alpha)) \\
\end{pmatrix}=
\begin{pmatrix}
\sigma^{t_1}(\alpha) & \sigma^{t_2}(\alpha) & \dots & \sigma^{t_m}(\alpha)) \\
\sigma(\sigma^{t_1}(\alpha)) & \sigma(\sigma^{t_2}(\alpha))& \dots & \sigma(\sigma^{t_m}(\alpha)) \\
\sigma^2(\sigma^{t_1}(\alpha)) & \sigma^2(\sigma^{t_2}(\alpha))& \dots & \sigma^2(\sigma^{t_m}(\alpha)) \\
\vdots & \vdots & \ddots & \vdots \\
\sigma^{m-1}(\sigma^{t_1}(\alpha)) & \sigma^{m-1}(\sigma^{t_2}(\alpha))& \dots & \sigma^{m-1}(\sigma^{t_m}(\alpha)) \\\end{pmatrix}
$$
is zero. However, by Lemma \ref{circulantlemma}, $|M'|\not =0$, so the single solution of the linear system \eqref{linsystNN} is \(g_0 = \dots = g_{m-1} = 0\), a contradiction. Therefore \(\deg g^{\ell} = m\).  For the other polynomial we proceed similarly: if $\deg g^r<m$ and $g^r=\sum_{i=0}^{m-1}g_ix^i$, we obtain the linear system
\begin{equation}\label{linsystNN2}
\sum_{i=0}^{m-1} \sigma^{-i}(g_i) N_{-i}(\sigma^{t_j}(\beta^{-1})) = 0 \text{ for any } 1 \leq j \leq m.
\end{equation}
Observe now that $N_{-i}(\sigma^{t_j}(\beta^{-1}))=\sigma^{t_j}(\alpha^{-1})\sigma^{t_j-i+1}(\alpha)$ for $0\leq i \leq m-1$ and $1\leq j \leq m$. Then, by Lemma \ref{circulantlemma}, the system has a single solution $\sigma^{-i}(g_i)=0$ for $0\leq i\leq m-1$, so $g_0=g_1=\cdots =g_{m-1}=0$. Again, this yields a contradiction, so $\deg g^r=m$.
\end{proof}

\begin{definition}\label{BCHCC}
Let $\alpha,\beta\in\mathbb{F}(t)$ verifying the conditions of Lemma \ref{degreelclm}. A skew BCH convolutional code of designed distance $\delta\leq n$ is an SCCC generated by $\lclm{x-\sigma^{r}(\beta),x-\sigma^{r+1}(\beta),\ldots , x-\sigma^{r+\delta-2}(\beta)}$, for some $r\geq 0$.
\end{definition}

\begin{theorem}\label{hdist}
Let $\mathcal{C}$ be a skew BCH convolutional code of designed distance $\delta$. The  Hamming distance of $\mathcal{C}$ is $\delta$.
\end{theorem}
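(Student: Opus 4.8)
I would squeeze the Hamming distance of $\mathcal{C}$ between $\delta$ and $\delta$, the lower bound being the real content. Write $g=\lclm{x-\sigma^{r}(\beta),x-\sigma^{r+1}(\beta),\ldots ,x-\sigma^{r+\delta-2}(\beta)}$ for the generator. By Lemma \ref{degreelclm}, $\deg g=\delta-1$, so, read as a polynomial $\sum_{i=0}^{\delta-1}g_ix^i$, the nonzero codeword $g$ has at most $\delta$ nonzero coefficients; hence $\weight(g)\le\delta$, and the Hamming distance of $\mathcal{C}$ is at most $\delta$. (Equivalently, since $\mathcal{C}$ has dimension $n-(\delta-1)$ over $\mathbb{F}(t)$, the Singleton bound gives the same inequality, and equality will make $\mathcal{C}$ an MDS code.) It then remains to prove the reverse inequality: every nonzero $c\in\mathcal{C}$ satisfies $\weight(c)\ge\delta$. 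This is the non-commutative version of the BCH bound.

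For the lower bound I would start from a nonzero degree-$<n$ representative $c=\sum_{j=0}^{n-1}c_jx^j$ of a codeword and first note that each $x-\sigma^{r+i}(\beta)$, $0\le i\le\delta-2$, right divides $c$ in $R$. Indeed, $g$ right divides $x^n-1$, because $x^n-1=\lclm{x-\beta,\ldots ,x-\sigma^{n-1}(\beta)}$ by \eqref{mainprop} and $g$ is the least common left multiple of a subfamily; writing $x^n-1=hg$ and $c\equiv ag\pmod{x^n-1}$ gives $c=(a+bh)g$ for suitable $a,b,h\in R$, so $g$, and hence each $x-\sigma^{r+i}(\beta)$, right divides $c$. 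By the formula for remainders recalled before the statement (Lemma \ref{eval} in the Appendix), this is exactly the system of syndrome equations $\sum_{j=0}^{n-1}c_j\,N_j(\sigma^{r+i}(\beta))=0$ for $0\le i\le\delta-2$. Assume now, for contradiction, that $w:=\weight(c)\le\delta-1$, and let $\{j_1<\dots<j_w\}$ be the support of $c$. The equations $i=0,\dots,w-1$ then form a homogeneous $w\times w$ linear system in the unknowns $c_{j_1},\dots,c_{j_w}$ with coefficient matrix $A=\bigl(N_{j_\ell}(\sigma^{r+i}(\beta))\bigr)_{0\le i\le w-1,\ 1\le\ell\le w}$, and everything reduces to showing $\det A\neq 0$.

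This is where the shape of $\beta$ is used. Since $\beta=\alpha^{-1}\sigma(\alpha)$, the product defining the norm telescopes: $N_j(\beta)=\alpha^{-1}\sigma^j(\alpha)$; combined with $N_j(\sigma^{s}(\gamma))=\sigma^{s}(N_j(\gamma))$ this yields $N_{j_\ell}(\sigma^{r+i}(\beta))=\sigma^{r+i}(\alpha^{-1})\,\sigma^{r+i+j_\ell}(\alpha)$. Extracting $\sigma^{r+i}(\alpha^{-1})$ from the $i$-th row and then the automorphism $\sigma^{r}$ from every entry, one gets $A=\operatorname{diag}\!\bigl(\sigma^{r}(\alpha^{-1}),\dots ,\sigma^{r+w-1}(\alpha^{-1})\bigr)\cdot\sigma^{r}(B)$ with $B=\bigl(\sigma^{i}(\sigma^{j_\ell}(\alpha))\bigr)_{0\le i\le w-1,\ 1\le\ell\le w}$. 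The diagonal matrix is invertible, $\sigma^r$ is a field automorphism, and $B$ is precisely the matrix to which Lemma \ref{circulantlemma} applies, with $K$-basis $\{\sigma^0(\alpha),\dots ,\sigma^{n-1}(\alpha)\}$ of $\mathbb{F}(t)$ over $\mathbb{F}(t)^\sigma$ and index subset $\{j_1<\dots<j_w\}$ (allowed because $w\le n$). Hence $\det B\neq 0$, so $\det A\neq 0$, which forces $c_{j_1}=\dots=c_{j_w}=0$, contradicting the choice of the support. Therefore $\weight(c)\ge\delta$ for every nonzero codeword, and, with the upper bound, the Hamming distance of $\mathcal{C}$ equals $\delta$.

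The only step requiring real care --- and the natural place for an error to creep in --- is the determinant evaluation: one has to peel off the row scalars $\sigma^{r+i}(\alpha^{-1})$ and the global twist by $\sigma^{r}$ correctly and then recognise the leftover matrix as the $\sigma$-Vandermonde/``circulant'' determinant of Lemma \ref{circulantlemma} attached to the normal basis $\{\sigma^i(\alpha)\}$. The telescoping identity $N_j(\beta)=\alpha^{-1}\sigma^j(\alpha)$ is the key simplification, playing the role of the geometric progression $\beta,\beta^2,\dots$ in the classical BCH argument. Everything else --- the right-divisibility of $c$ by the linear factors, the passage to syndrome equations, and the upper bound --- is routine, so I expect no further difficulty.
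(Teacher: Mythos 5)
Your proposal is correct and follows essentially the same route as the paper: the paper also reduces the lower bound to showing that every $(\delta-1)\times(\delta-1)$ minor of the parity-check matrix $\bigl(N_i(\sigma^{r+j}(\beta))\bigr)$ is nonzero, using exactly the telescoping identity $N_i(\sigma^{k}(\beta))=\sigma^{k}(\alpha^{-1})\sigma^{i+k}(\alpha)$ to peel off the row scalars and invoke Lemma \ref{circulantlemma}. Your version merely makes explicit two points the paper leaves implicit (the upper bound via $\deg g=\delta-1$, and the derivation of the syndrome equations from right-divisibility of codewords by the linear factors), which is fine.
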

\begin{proof}
Let us denote by $g=\lclm{x-\sigma^{r}(\beta),x-\sigma^{r+1}(\beta),\ldots , x-\sigma^{r+\delta-2}(\beta)}$, a generator of $\mathcal{C}$ as a left ideal of $\mathcal{R}$. A parity check matrix is 
\[
H = \begin{pmatrix}
N_0(\sigma^{r}(\beta)) & N_0(\sigma^{r+1}(\beta)) & \dots & N_0(\sigma^{r+\delta-2}(\beta)) \\
N_1(\sigma^{r}(\beta)) & N_1(\sigma^{r+1}(\beta)) & \dots & N_1(\sigma^{r+\delta-2}(\beta)) \\
N_2(\sigma^{r}(\beta)) & N_2(\sigma^{r+1}(\beta)) & \dots & N_2(\sigma^{r+\delta-2}(\beta)) \\
\vdots & \vdots & \ddots & \vdots \\
N_{n-1}(\sigma^{r}(\beta)) & N_{n-1}(\sigma^{r+1}(\beta)) & \dots & N_{n-1}(\sigma^{r+\delta-2}(\beta)) \\
\end{pmatrix},
\]
since its columns give the right evaluations on the roots. We have to prove that any $\delta-1$-minor of $H$ is non zero. We proceed similarly to the proof of Lemma \ref{degreelclm}. Note that  $N_i(\sigma{k}(\beta))=\sigma^{k}(N_i(\beta))=\sigma^{k}(\alpha^{-1})\sigma^{i+k}(\alpha)$ for any integers $i$ and $k$. Therefore, given a  submatrix of order $\delta-1$,
\[M = \begin{pmatrix}
N_{k_1}(\sigma^{r}(\beta)) & N_{k_1}(\sigma^{r+1}(\beta)) & \dots & N_{k_1}(\sigma^{r+\delta-2}(\beta)) \\
N_{k_2}(\sigma^{r}(\beta)) & N_{k_2}(\sigma^{r+1}(\beta)) & \dots & N_{k_2}(\sigma^{r+\delta-2}(\beta)) \\
N_{k_3}(\sigma^{r}(\beta)) & N_{k_3}(\sigma^{r+1}(\beta)) & \dots & N_{k_3}(\sigma^{r+\delta-2}(\beta)) \\
\vdots & \vdots & \ddots & \vdots \\
N_{k_{\delta-1}}(\sigma^{r}(\beta)) & N_{k_{\delta-1}}(\sigma^{r+1}(\beta)) & \dots & N_{k_{\delta-1}}(\sigma^{r+\delta-2}(\beta)) \\
\end{pmatrix} \, \text{with $\{k_1<k_2<\cdots <k_{\delta-1}\}\subset\{0,1,\ldots ,n-1\}$},
\]
$|M|=0$ if and only if $|M'|=0$, where $M'$ is the matrix
\[\begin{pmatrix}
\sigma^{k_1+r}(\alpha) & \sigma^{k_1+r+1}(\alpha) & \dots & \sigma^{k_1+r+\delta-2}(\alpha) \\
\sigma^{k_2+r}(\alpha) & \sigma^{k_2+r+1}(\alpha) & \dots & \sigma^{k_2+r+\delta-2}(\alpha) \\
\sigma^{k_3+r}(\alpha) & \sigma^{k_3+r+1}(\alpha) & \dots & \sigma^{k_3+r+\delta-2}(\alpha) \\
\vdots & \vdots & \ddots & \vdots \\
\sigma^{k_{\delta-1}+r}(\alpha) & \sigma^{k_{\delta-1}+r+1}(\alpha) & \dots & \sigma^{k_{\delta-1}+r+\delta-2}(\alpha) \\
\end{pmatrix}=\begin{pmatrix}
\sigma^{k_1+r}(\alpha) & \sigma(\sigma^{k_1+r}(\alpha)) & \dots & \sigma^{\delta-2}(\sigma^{k_1+r}(\alpha)) \\
\sigma^{k_2+r}(\alpha) & \sigma(\sigma^{k_2+r}(\alpha)) & \dots & \sigma^{\delta-2}(\sigma^{k_2+r}(\alpha)) \\
\sigma^{k_3+r}(\alpha) & \sigma(\sigma^{k_3+r}(\alpha)) & \dots & \sigma^{\delta-2}(\sigma^{k_3+r}(\alpha)) \\
\vdots & \vdots & \ddots & \vdots \\
\sigma^{k_{\delta-1}+r}(\alpha) & \sigma(\sigma^{k_{\delta-1}+r}(\alpha)) & \dots & \sigma^{\delta-2}(\sigma^{k_{\delta-1}+r}(\alpha)) \\
\end{pmatrix}.
\]
Since $\{\alpha,\sigma{(\alpha)},\ldots ,\sigma^{n-1}(\alpha)\}$ is a basis of the extension $\mathbb{F}(t)^\sigma\subset\mathbb{F}(t)$, by Lemma \ref{circulantlemma}, $|M'|\not =0$.
\end{proof}

\begin{corollary}
Let $\mathcal{C}$ be a skew BCH convolutional code of length $n$ and dimension $k$. Then  $$n-k+1\leq \mathrm{d}_{\mathrm{free}}(\mathcal{C}) \leq (n-k)(\lfloor m/k+1) \rfloor+m+1,$$
where $\mathrm{d}_{\mathrm{free}}(\mathcal{C})$ is the free distance of $\mathcal{C}$ and $m$ is its total memory.
\end{corollary}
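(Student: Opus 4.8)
The plan is to prove the two inequalities separately: the left-hand one is a consequence of Theorem \ref{hdist} (the MDS property for the Hamming distance), while the right-hand one is just the generalized Singleton bound for convolutional codes specialized to $\mathcal{C}$.

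For the lower bound I would first pin down the dimension of $\mathcal{C}$ in terms of $\delta$. The generator of a skew BCH convolutional code of designed distance $\delta$ is $\lclm{x-\sigma^{r}(\beta),\dots,x-\sigma^{r+\delta-2}(\beta)}$, an lclm of $\delta-1$ linear polynomials, which has degree exactly $\delta-1$ by Lemma \ref{degreelclm}. Since a left ideal of $\mathcal{R}$ generated by a right divisor of $x^n-1$ of degree $d$ has $\mathbb{F}(t)$--dimension $n-d$ (this is the dimension count used in the construction recalled before Definition \ref{BCHCC}, cf.\ \cite{GLNSCCC}), it follows that $k=n-\delta+1$, i.e.\ $\delta=n-k+1$. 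Theorem \ref{hdist} then says that every nonzero codeword of $\mathcal{C}$ has at least $\delta=n-k+1$ nonzero coordinates in $\mathbb{F}(t)$. Finally I would invoke the elementary observation that the free weight of a nonzero codeword with polynomial coordinates is the sum of the $\mathbb{F}[t]$--Hamming weights of its coordinates, hence at least the number of its nonzero coordinates, hence at least $n-k+1$; minimizing over codewords yields $n-k+1\le\mathrm{d}_{\mathrm{free}}(\mathcal{C})$.

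For the upper bound, the total memory $m$ of $\mathcal{C}$ is by definition the degree of the convolutional code (the sum of its Forney indices), so one can simply apply the generalized Singleton bound, which states precisely that $\mathrm{d}_{\mathrm{free}}(\mathcal{C})\le (n-k)(\lfloor m/k\rfloor+1)+m+1$. Nothing specific to skew BCH codes enters here.

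I do not expect a genuine obstacle: the lower bound is an immediate combination of Theorem \ref{hdist} with the trivial inequality between Hamming weight and free weight, and the upper bound is a direct citation of a classical general bound. The only points needing care are bookkeeping ones — recording that the generator has degree exactly $\delta-1$ so that $k=n-\delta+1$, and matching the terminology ``total memory'' with the degree parameter appearing in the generalized Singleton bound.
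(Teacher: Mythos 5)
Your proposal is correct and follows essentially the same route as the paper: identify $\delta=n-k+1$ via Theorem \ref{hdist} and the dimension count, bound the free distance below by the Hamming distance, and above by the generalized Singleton bound. You simply spell out in more detail the two steps the paper states without elaboration (the degree of the generator giving $k=n-\delta+1$, and the weight comparison between free and Hamming distance).
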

\begin{proof}
By Theorem \ref{hdist}, a skew BCH convolutional code of designed Hamming distance $\delta$ has dimension $k=n-\delta+1$. Hence, $\delta=n-k+1$.  Now, the free distance of $\mathcal{C}$ is always greater than its Hamming distance, and lower than the generalized singleton bound $(n-k)(\lfloor m/k+1) \rfloor+m+1$. 
\end{proof}

In the following example we show that these bounds cannot be improved.

\begin{example}
Let $\mathbb{F}=\mathbb{F}_8$ be the field with eight elements, $\mathbb{F}(t)$ the field of rational functions over $\mathbb{F}$ and $\sigma:\mathbb{F}(t)\to \mathbb{F}(t)$ the automorphism defined by $\sigma(t)=1/t$, whose order is clearly two. The quotient algebra is then $\mathcal{R}=\mathbb{F}(t)[x;\sigma]/\langle x^2-1 \rangle$. Following the construction described in Definition \ref{BCHCC}, we consider the element $\alpha=t\in \mathbb{F}(t)$. Since
$$\begin{vmatrix}
\alpha & \sigma(\alpha)  \\ 
\sigma(\alpha) & \alpha 
\end{vmatrix}=\begin{vmatrix}
t & 1/t  \\ 
1/t & t 
\end{vmatrix}=t^2+1/t^2\not = 0,$$ $\{\alpha,\sigma(\alpha)\}$ is basis of the field extension $\mathbb{F}(t)^\sigma\subset\mathbb{F}(t)$. Set $\beta=\alpha^{-1}\sigma(\alpha)=1/t^2$. Hence $x-\beta=x+1/t^2$ gives a generator of an SCCC  $\mathcal{C}$ of length $n=2$ and dimension $k=1$. Actually, it is a skew BCH convolutional code of designed distance $\delta=2$. A basic (and minimal) generator matrix of $\mathcal{C}$ is given by $M=(1 \text{  }  t^2)$, so the degree (or total memory) $m$ of the encoder is 2. Now, following the procedure described in \cite[Theorems 3.4 and 3.6]{Zigangirov}, the three first terms of the sequence of column distances is 1,1,2; whilst the first terms of the sequence of row distances is 2,2,2. Therefore the free distance of $\mathcal{C}$, $\distance_{\mathrm{free}}(\mathcal{C} )=2=n-k+1$. Observe that the generalized singleton bound \cite{SmarandacheGluesingRosenthal} with these parameters is 6. 

 Let us now consider the automorphism  of order two defined by $\sigma(t)=2t$ on $\mathbb{F}=\mathbb{F}_3$. Set $\alpha=t+1$. In this case, $\beta=(2t + 1)/(t + 1)$ and $x-\sigma(\beta)$ is a generator of a skew BCH convolutional code with $\delta=2$, $n=2$ and $k=1$. A minimal generator matrix is $M=(t + 1 \text{  }  t + 2)$, so $m=1$. For this  SCCC, the $0$th column distance and the 3rd row distance are 4, hence $\distance_{\mathrm{free}}(\mathcal{C} )=4$. Then $n-k+1=2<4=\distance_{\mathrm{free}}(\mathcal{C} )=(n-k)(\lfloor m/k \rfloor+1)+m+1$, so it reaches the generalized singleton bound.
\end{example}

\section{A Sugiyama-like decoding algorithm}\label{sugiyama}

Throughout this section $\mathcal{C}$ will denote a skew BCH convolutional code of designed distance $\delta$ generated, as a left ideal of $\mathcal{R}$, by $g = \lclm{x-\sigma^{r}(\beta),x-\sigma^{r+1}(\beta),\ldots, x-\sigma^{r+\delta-2}(\beta)}$ for some $r\geq 0$, where $\beta$ is chosen as in Definition \ref{BCHCC}.  The Hamming distance of $\mathcal{C}$ is exactly $\delta$ (Theorem \ref{hdist}).  Set $\tau=\lfloor \frac{\delta-1}{2} \rfloor$ which is the maximum number of errors than the code can correct.
For simplicity, we shall suppose that $r=0$. This is not a restriction, because we may always write $\beta'=\sigma^{r}(\beta)$. Then $\beta'=(\alpha')^{-1}\sigma(\alpha')$, where $\alpha'=\sigma^r(\alpha)$, and $\alpha'$ also provides a normal basis. Therefore, $g=\lclm{x-\beta',x-\sigma(\beta'),\ldots, x-\sigma^{\delta-2}(\beta')}$. 

Let \(c \in \mathcal{C}\) be a codeword that is transmitted through a noisy channel and the polynomial \(y = c + e\) is received, where \(e = e_1 x^{k_1} + \dots + e_\nu x^{k_\nu}\) with \(\nu \leq \tau\).
We define the \emph{error locator} polynomial as  
\[
\lambda = \lcrm{1-\sigma^{k_1}(\beta)x, 1-\sigma^{k_2}(\beta) x, \ldots , 1-\sigma^{k_{\nu}}(\beta)x}.
\]
We first show that $\lambda$ determines the positions with a non-zero error. 

\begin{lemma}\label{bothlcrm}
For any subset $\{t_1,t_2,\ldots ,t_m\}\subseteq \{0,1,\ldots ,n-1\}$,
$$\lcrm{1-\sigma^{t_1}(\beta)x, 1-\sigma^{t_2}(\beta)x,\ldots ,1-\sigma^{t_m}(\beta)x}=\lcrm{x-\sigma^{t_1-1}(\beta^{-1}),x-\sigma^{t_2-1}(\beta^{-1}),\ldots ,x-\sigma^{t_m-1}(\beta^{-1})}.$$
\end{lemma}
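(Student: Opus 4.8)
The plan is to reduce the claimed identity to the equality of two principal right ideals of $R$, using that each generator on the left-hand side differs from the corresponding generator on the right-hand side by a unit of $R$.

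First I would record an elementary factorization in $R$: for every nonzero $\gamma\in\mathbb{F}(t)$,
\[
1-\gamma x=\bigl(x-\sigma^{-1}(\gamma^{-1})\bigr)\cdot\bigl(-\sigma^{-1}(\gamma)\bigr).
\]
This is immediate from the commutation rule $x\mu=\sigma(\mu)x$: the degree-one part of the right-hand side is $x\cdot\bigl(-\sigma^{-1}(\gamma)\bigr)=-\gamma x$, while its constant term is $-\sigma^{-1}(\gamma^{-1})\cdot\bigl(-\sigma^{-1}(\gamma)\bigr)=\sigma^{-1}(\gamma^{-1}\gamma)=1$. Taking $\gamma=\sigma^{t_j}(\beta)$ and using $\sigma^{-1}\bigl(\sigma^{t_j}(\beta)^{-1}\bigr)=\sigma^{t_j-1}(\beta^{-1})$ together with $\sigma^{-1}\bigl(\sigma^{t_j}(\beta)\bigr)=\sigma^{t_j-1}(\beta)$, this specializes to
\[
1-\sigma^{t_j}(\beta)x=\bigl(x-\sigma^{t_j-1}(\beta^{-1})\bigr)\cdot\bigl(-\sigma^{t_j-1}(\beta)\bigr),\qquad 1\le j\le m.
\]

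Since $-\sigma^{t_j-1}(\beta)$ is a nonzero scalar, it is a unit of $R$ (the units of the skew polynomial ring $R$ are exactly the nonzero elements of $\mathbb{F}(t)$), so the last identity gives $\bigl(1-\sigma^{t_j}(\beta)x\bigr)R=\bigl(x-\sigma^{t_j-1}(\beta^{-1})\bigr)R$ for each $j$. Intersecting over $j$, the right ideals $\bigcap_{j=1}^{m}\bigl(1-\sigma^{t_j}(\beta)x\bigr)R$ and $\bigcap_{j=1}^{m}\bigl(x-\sigma^{t_j-1}(\beta^{-1})\bigr)R$ coincide. As $R$ is a right principal ideal domain, this common intersection has a unique monic generator, which by definition is the least common right multiple of either family of generators; hence the two sides of the stated identity agree.

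There is no genuinely hard step here; the only delicate points are the bookkeeping of the exponents of $\sigma$ in the factorization (where it is easy to be off by one) and the invocation of the normalization convention that makes $\lcrm{-}$ well defined, namely that it denotes the monic generator of the intersection of the associated principal right ideals, so that replacing a generator by a unit multiple does not change its value. One can also phrase the last step without ideals, via the universal property of $\lcrm{-}$: a polynomial is a common right multiple of the $1-\sigma^{t_j}(\beta)x$ if and only if it is a common right multiple of the $x-\sigma^{t_j-1}(\beta^{-1})$, because right divisibility is insensitive to the unit factors, and uniqueness of the least common right multiple then yields the equality.
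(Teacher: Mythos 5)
Your proof is correct and takes essentially the same route as the paper's: both rest on the factorization $1-\gamma x=\bigl(x-\sigma^{-1}(\gamma^{-1})\bigr)\bigl(-\sigma^{-1}(\gamma)\bigr)$ (together with its reverse, which you obtain implicitly by inverting the unit factor), so that corresponding generators of the two families left divide one another and hence have the same common right multiples. The only difference is cosmetic: you phrase the conclusion via equality of principal right ideals, while the paper phrases it via mutual left divisibility.
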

 \begin{proof}
 For any $a\in \mathbb{F}(t)$, $1-ax=(x-\sigma^{-1}(a^{-1}))(-\sigma^{-1}(a))$ and $x-\sigma^{-1}(a^{-1})=(1-ax)(-\sigma^{-1}(a^{-1}))$. Therefore, the polynomials of the statement left divide one to each other.  \end{proof}

\begin{proposition}\label{ev}
 $1-\sigma^d(\beta)x$ left divides $\lambda$ if and only if $x-\sigma^{d-1}(\beta^{-1})$ left divides $\lambda$ if and only if $d\in \{k_1,\ldots, k_\nu\}$
 \end{proposition}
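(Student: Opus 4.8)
The plan is to prove the two equivalences separately, using the structural results already available. The first equivalence, that $1-\sigma^d(\beta)x$ left divides $\lambda$ if and only if $x-\sigma^{d-1}(\beta^{-1})$ left divides $\lambda$, is immediate from Lemma \ref{bothlcrm}: the factorizations $1-ax=(x-\sigma^{-1}(a^{-1}))(-\sigma^{-1}(a))$ and $x-\sigma^{-1}(a^{-1})=(1-ax)(-\sigma^{-1}(a^{-1}))$ with $a=\sigma^d(\beta)$ (so $\sigma^{-1}(a^{-1})=\sigma^{d-1}(\beta^{-1})$) show that each of the two linear polynomials is a left multiple of the other up to a nonzero scalar from $\mathbb{F}(t)$, hence one left divides $\lambda$ exactly when the other does.

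For the remaining equivalence, the "if" direction is the easy half. If $d\in\{k_1,\dots,k_\nu\}$, then $1-\sigma^d(\beta)x$ is one of the linear polynomials in the defining lcrm of $\lambda$, so by the very definition of least common right multiple it right divides... wait, one must be careful about sides here: $\lambda$ is the $\mathrm{lcrm}$, so each factor $1-\sigma^{k_i}(\beta)x$ \emph{left} divides $\lambda$ (the lcrm is a common right multiple, i.e.\ each generator divides it from the left). So if $d=k_i$ for some $i$, then $1-\sigma^d(\beta)x$ left divides $\lambda$ by construction. Then the middle statement follows from the first equivalence.

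The substantive content is the converse: if $x-\sigma^{d-1}(\beta^{-1})$ left divides $\lambda$ then $d\in\{k_1,\dots,k_\nu\}$. This is exactly where Lemma \ref{degreelclm} does the work. By Lemma \ref{bothlcrm}, $\lambda=\lcrm{x-\sigma^{k_1-1}(\beta^{-1}),\dots,x-\sigma^{k_\nu-1}(\beta^{-1})}$, which has the shape of the polynomial $g^r$ in Lemma \ref{degreelclm} with the index set $T=\{k_1-1,k_2-1,\dots,k_\nu-1\}$ (reduced modulo $n$; note $\beta^{-1}=\sigma^{-1}(\alpha)^{-1}\cdot\ldots$, or more simply one rewrites $\sigma^{k_i-1}(\beta^{-1})$ in the form $\sigma^{s}(\beta^{-1})$ with $s\equiv k_i-1 \pmod n$, so Lemma \ref{degreelclm} applies verbatim). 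Lemma \ref{degreelclm} then asserts that if $x-\sigma^{s}(\beta^{-1})$ left divides $g^r=\lambda$, then $s\in T$, i.e.\ $s\equiv k_i-1\pmod n$ for some $i$. Applying this with $s=d-1$ gives $d-1\equiv k_i-1\pmod n$, hence $d\equiv k_i\pmod n$, and since both lie in $\{0,\dots,n-1\}$, $d=k_i$. I expect the only real subtlety to be bookkeeping with the exponents of $\sigma$ modulo $n$ and confirming that the hypothesis "$\{t_1,\dots,t_m\}\subseteq\{0,\dots,n-1\}$" of Lemma \ref{degreelclm} is met after shifting the indices by $-1$; this is harmless since $\sigma^n=\mathrm{id}$, so one may replace $k_i-1$ by its representative mod $n$ without changing the polynomial. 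No deeper obstacle is anticipated, as Lemma \ref{degreelclm} was evidently designed precisely for this application.
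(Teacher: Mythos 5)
Your proof is correct and follows essentially the same route as the paper: the right-associate factorizations of $1-ax$ and $x-\sigma^{-1}(a^{-1})$ underlying Lemma \ref{bothlcrm} give the first equivalence, and Lemma \ref{degreelclm}, applied to $\lambda$ viewed as the polynomial $g^r$ with index set $\{k_1-1,\dots,k_\nu-1\}$ reduced modulo $n$, gives the second. You merely make explicit the trivial ``if'' direction (each generator left divides the $\mathrm{lcrm}$) and the mod-$n$ bookkeeping of the exponents, both of which the paper leaves implicit.
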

\begin{proof}
By Lemma \ref{bothlcrm}, $1-\sigma^d(\beta)x$ left divides $\lambda$ if and only if $x-\sigma^{d-1}(\beta^{-1})$ left divides $\lambda$.  Now, by Lemma \ref{degreelclm}, $x-\sigma^{d-1}(\beta^{-1})$ left divides $\lambda$ if and only if $d\in \{k_1,\ldots, k_\nu\}$.
\end{proof}

Therefore, once $\lambda$ is known, the error positions can be located by following the rule: $d\in \{0,1,\ldots ,n-1\}$ is an error position if and only if $\sigma^{d-1}(\beta^{-1})$ is a left root of $\lambda$. Observe that $\lambda$ may be replaced by any polynomial in $R$ associated on the right to $\lambda$, that, is, any polynomial differing from $\lambda$ by multiplication on the right by a nonzero element in $\mathbb{F}(t)$.

For any \(1 \leq j \leq \nu\), $\lambda= (1-\sigma^{k_j}(\beta) x) p_j$ for some polynomial $p_j\in R$ with \(\deg p_j = \nu-1\). We define the \emph{error evaluator} polynomial as 
$\omega = \sum_{j=1}^\nu e_j \sigma^{k_j}(\alpha) p_j$. Once we know the error locator polynomial and the error evaluator polynomial, we may compute the values $e_1,e_2,\ldots , e_\nu$ by solving a linear system and determine completely the error $e$. Observe also that \(\deg \omega < \nu\).

Finally, for each \(0 \leq i \leq n-1\), the \(i\)-th syndrome $S_i$ of the received polynomial $y  = \sum_{j=0}^{n-1}y_jx^j$ is defined to be the remainder of the left quotient of \(y\) by \(x -\sigma^i(\beta)\).  Observe that $S_i$ is the right evaluation of $y$ at $\sigma^i(\beta)$. Whenever $0\leq i \leq 2\tau-1$, the right evaluations on \(c\) are zero, and it follows that
\begin{equation}\label{syndromei}
S_i = \sum_{j=0}^{n-1} y_j N_j(\sigma^i(\beta)) = \sum_{j=1}^\nu e_j N_{k_j}(\sigma^i(\beta)) = \sum_{j=1}^\nu e_j \sigma^i(N_{k_j}(\beta))=\sum_{j=1}^\nu e_j \sigma^i(\alpha^{-1})\sigma^{k_j+i}(\alpha)=\sigma^i(\alpha^{-1})\sum_{j=1}^\nu e_j \sigma^{k_j+i}(\alpha).
\end{equation}
Therefore $\sigma^i(\alpha)S_i =\sum_{j=1}^\nu e_j \sigma^{k_j+i}(\alpha)$ and 
we call \(S= \sum_{i=0}^{2\tau-1} \sigma^i(\alpha)S_i x^i\)  the \emph{syndrome polynomial} of $y$. 

\begin{theorem}
The error locator and the error evaluator satisfy the \emph{non-commutative key equation}
\[
\omega = S \lambda + x^{2\tau}  u,
\]
where $u \in R$ is of degree less than $\nu$.
\end{theorem}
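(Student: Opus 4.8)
The plan is to reduce the key equation to an identity between the two defining expressions of $\lambda$ and $\omega$ after left‑multiplying by the single linear factor $1-\sigma^{k_j}(\beta)x$, summed against the error values. Concretely, for each $j$ write $\lambda = (1-\sigma^{k_j}(\beta)x)p_j$ with $\deg p_j = \nu-1$, and recall $\omega = \sum_j e_j\sigma^{k_j}(\alpha)p_j$. The first step is to compute, for a fixed $j$, the product $S\cdot(1-\sigma^{k_j}(\beta)x)$ in $R$, expand $S=\sum_{i=0}^{2\tau-1}\sigma^i(\alpha)S_i x^i$, and use the commutation rule $x\gamma=\sigma(\gamma)x$ together with the explicit value $\sigma^i(\alpha)S_i=\sum_{\ell=1}^\nu e_\ell\sigma^{k_\ell+i}(\alpha)$ from \eqref{syndromei}. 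I expect the coefficient of $x^i$ for $1\le i\le 2\tau-1$ to telescope: the $x^i$‑term coming from $\sigma^i(\alpha)S_i$ and the $x^i$‑term coming from $-\sigma^{i-1}(\alpha)S_{i-1}\sigma^{k_j}(\beta)x$ should cancel because $\sigma^{i-1}(\alpha)\sigma^{k_j}(\beta)\cdot(\text{shift})$ reproduces $\sigma^{k_\ell+i}(\alpha)$ precisely for $\ell=j$ (using $\beta=\alpha^{-1}\sigma(\alpha)$ so that $\sigma^{k_j+i-1}(\alpha)\,\sigma^{k_j+i-1}(\beta)=\sigma^{k_j+i}(\alpha)$ after applying $\sigma^{i-1}$). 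What survives is the constant term $\sigma^0(\alpha)S_0 = \sum_\ell e_\ell\sigma^{k_\ell}(\alpha)$ at the low end, minus a single $x^{2\tau}$‑term at the top end, plus — crucially — for the $\ell=j$ summand the cancellation is total across \emph{all} degrees $1,\dots,2\tau-1$, while for $\ell\ne j$ the telescoping leaves a "defect" at degree $0$ and at degree $2\tau$ only.

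The second step is to assemble these per‑$j$ identities. Multiply $S(1-\sigma^{k_j}(\beta)x)=A_j + x^{2\tau}B_j$ (schematically, with $A_j$ of degree $<2\tau$ and $B_j$ a single coefficient) on the right by $p_j$, multiply by $e_j\sigma^{k_j}(\alpha)$, and sum over $j$. The left side becomes $S\lambda$ since $\sum_j e_j\sigma^{k_j}(\alpha)\,(1-\sigma^{k_j}(\beta)x)p_j$ must be rearranged — here one uses that $(1-\sigma^{k_j}(\beta)x)p_j=\lambda$ for every $j$, so $S\lambda$ appears once we factor; the point is that $S\lambda = \sum_j e_j\sigma^{k_j}(\alpha)\,S(1-\sigma^{k_j}(\beta)x)p_j \big/ \big(\sum_j e_j\sigma^{k_j}(\alpha)\big)$ does \emph{not} quite work, so instead I will argue directly: $S\lambda = S(1-\sigma^{k_j}(\beta)x)p_j$ for the one fixed $j$, and then show $S(1-\sigma^{k_j}(\beta)x)p_j = \omega - x^{2\tau}u_j$ by comparing degrees. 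The cleaner route is: it suffices to prove $S\lambda \equiv \omega \pmod{x^{2\tau}R}$ and that $\deg(\omega - S\lambda)$ is controlled. For the congruence mod $x^{2\tau}R$, reduce everything in $R/x^{2\tau}R$; there $x$ is "nilpotent enough", and the computation of $S(1-\sigma^{k_j}(\beta)x)$ modulo $x^{2\tau}$ should give exactly $\sigma^{k_j}(\alpha)\,e_j^{-1}\cdot(\text{something})$ — more precisely, modulo $x^{2\tau}$ one gets $S(1-\sigma^{k_j}(\beta)x)\equiv \sum_\ell e_\ell\sigma^{k_\ell}(\alpha)\cdot\big(\text{a power series that truncates}\big)$, and after right‑multiplication by $p_j$ and the $\ell=j$ telescoping, the class of $S\lambda$ equals the class of $\sum_j e_j\sigma^{k_j}(\alpha)p_j=\omega$.

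The third step handles the degree bound on $u$: since $\deg S = 2\tau-1$ and $\deg\lambda=\nu$, we have $\deg(S\lambda)\le 2\tau-1+\nu$, hence $\deg(S\lambda + x^{2\tau}u)=\deg\omega<\nu$ forces $\deg u\le (2\tau-1+\nu)-2\tau = \nu-1<\nu$, which is exactly the claim; one just has to note $u$ is the quotient of $\omega - S\lambda$ by $x^{2\tau}$ on the left, well‑defined because the low‑degree part vanishes by the congruence. The main obstacle I anticipate is bookkeeping the skew commutations cleanly: every time $x$ passes a scalar it applies $\sigma$, so the "telescoping" is really a telescoping of twisted products $N_i(\sigma^{\bullet}(\beta))$, and one must be careful that the indices $k_\ell+i$ match up and that the $\ell\ne j$ contributions genuinely cancel rather than merely shift — Lemma~\ref{eval} in the Appendix (the norm description of remainders) and the identity $N_i(\sigma^j(\gamma))=\sigma^j(N_i(\gamma))$ are the tools that make this rigorous. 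I would do the fixed‑$j$ computation first in full, confirm the two boundary terms, and only then sum; summing before telescoping risks obscuring the cancellation.
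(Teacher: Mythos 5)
Your step~3 (deducing $\deg u<\nu$ once the congruence $\omega\equiv S\lambda\pmod{x^{2\tau}}$ is known) is fine, but the heart of the proof --- the congruence itself --- is never actually established, and the mechanism you propose for it fails. Write $S=\sum_{i=0}^{2\tau-1}c_ix^i$ with $c_i=\sigma^i(\alpha)S_i=\sum_{\ell}e_\ell\sigma^{k_\ell+i}(\alpha)$ as in \eqref{syndromei}. Then for a fixed $j$ the coefficient of $x^i$ ($1\le i\le 2\tau-1$) in $S(1-\sigma^{k_j}(\beta)x)$ is $c_i-c_{i-1}\sigma^{k_j+i-1}(\beta)$, and since $\sigma^{k_j+i-1}(\beta)=\sigma^{k_j+i-1}(\alpha)^{-1}\sigma^{k_j+i}(\alpha)$, the cancellation occurs \emph{only} in the summand $\ell=j$: for every $\ell\neq j$ the quantity $e_\ell\bigl(\sigma^{k_\ell+i}(\alpha)-\sigma^{k_\ell+i-1}(\alpha)\sigma^{k_j+i-1}(\alpha)^{-1}\sigma^{k_j+i}(\alpha)\bigr)$ survives at \emph{every} degree $1,\dots,2\tau-1$. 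So your claim that the $\ell\neq j$ ``defect'' sits only at degrees $0$ and $2\tau$ is false, and $S\lambda=S(1-\sigma^{k_j}(\beta)x)p_j$ does not split as (low-degree part resembling $\omega$) plus $x^{2\tau}(\cdots)$ for any single $j$. Moreover $\omega=\sum_\ell e_\ell\sigma^{k_\ell}(\alpha)p_\ell$ involves all the $p_\ell$, while your fixed-$j$ identity only produces expressions ending in $p_j$; you give no argument for how the surviving $\ell\neq j$ coefficients reassemble into the other summands of $\omega$, and ``comparing degrees'' cannot identify a polynomial (you would, in addition, have to show that the coefficients of $x^i$ in $S\lambda$ vanish for $\nu\le i<2\tau$). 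You yourself note that summing $e_j\sigma^{k_j}(\alpha)\,S(1-\sigma^{k_j}(\beta)x)p_j$ over $j$ fails because these scalars do not commute past $S$ --- but the proposal never replaces that failed step with a working one.

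The paper sidesteps exactly this difficulty by going in the opposite direction: it inverts $1-\sigma^{k_j}(\beta)x$ in the skew power series ring $\mathbb{F}(t)[[x;\sigma]]$, so that $p_j=(1-\sigma^{k_j}(\beta)x)^{-1}\lambda=\sum_{i\ge0}N_i(\sigma^{k_j}(\beta))x^i\lambda$ becomes a \emph{left} multiple of $\lambda$. Substituting this into $\omega=\sum_je_j\sigma^{k_j}(\alpha)p_j$, interchanging the sums and using $\sigma^{k_j}(\alpha)N_i(\sigma^{k_j}(\beta))=\sigma^{k_j+i}(\alpha)$ (Lemma~\ref{eval}) yields $\omega=\bigl(\sum_{i\ge0}(\sum_je_j\sigma^{k_j+i}(\alpha))x^i\bigr)\lambda$; the truncation at $x^{2\tau}$ is precisely $S\lambda$ by \eqref{syndromei}, and the tail is recognized, again via the power series inverse, as $x^{2\tau}\sum_j\sigma^{-2\tau}(e_j)\sigma^{k_j}(\alpha)p_j$, which exhibits $u$ explicitly with $\deg u<\nu$. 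If you insist on a purely polynomial argument, you must verify coefficientwise that $\omega-S\lambda$ vanishes in degrees $0,\dots,2\tau-1$, which amounts to the same computation; as written, your proposal asserts the key identity rather than proving it.
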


\begin{proof}
First, observe that $R = \mathbb{F}(t)[x;\sigma]$ may be seen as a subring of the skew power series ring $\mathbb{F}(t)[[x;\sigma]]$ (see, e.g., \cite[Chapter 1, Section 4]{McConnell/Robson:1988}).  Given \(1-ax \in \mathbb{F}(t)[x;\sigma] \) with $a \in \mathbb{F}(t)$, a straigtforward computation in  $\mathbb{F}(t)[[x;\sigma]]$ shows that \((1-ax)^{-1} = \sum_{i \geq 0} N_i(a) x^i\). Thus,  \(p_j =(1- \sigma^{k_j}(\beta)x)^{-1}\lambda= \sum_{i \geq 0} N_i( \sigma^{k_j}(\beta)) x^i \lambda\) for any $1\leq j \leq \nu$. Then,
\[
\begin{split}
\omega &= \sum_{j=1}^\nu e_j \sigma^{k_j}(\alpha) \sum_{i \geq 0} N_i(\sigma^{k_j}(\beta)) x^i \lambda \\
&= \sum_{i \geq 0} \Big(\sum_{j=1}^\nu e_j \sigma^{k_j}(\alpha) N_i(\sigma^{k_j}(\beta))\Big) x^i \lambda \\
&= \sum_{i \geq 0} \Big(\sum_{j=1}^\nu e_j \sigma^{k_j}(\alpha) \sigma^{k_j}(N_i(\beta))\Big) x^i \lambda \text{, by Lemma \ref{eval}} \\
&= \sum_{i \geq 0} \Big(\sum_{j=1}^\nu e_j \sigma^{k_j}(\alpha) \sigma^{k_j}(\alpha^{-1}\sigma^i(\alpha))\Big) x^i \lambda \\
&= \sum_{i \geq 0} \Big(\sum_{j=1}^\nu e_j \sigma^{k_j+i}(\alpha)\Big) x^i \lambda \\
&= \sum_{i=0}^{2\tau-1} \Big(\sum_{j=1}^\nu e_j \sigma^{k_j+i}(\alpha)\Big) x^i \lambda + \sum_{i \geq 2\tau} \Big(\sum_{j=1}^\nu e_j \sigma^{k_j+i}(\alpha)\Big) x^i \lambda \\
&= \sum_{i=0}^{2\tau-1} \sigma^i(\alpha)S_i x^i \lambda + x^{2\tau}\sum_{h \geq 0} \Big(\sum_{j=1}^\nu \sigma^{-2\tau}(e_j) \sigma^{k_j+h}(\alpha)\Big) x^h \lambda \text{, by (\ref{syndromei})} \\
&=S \lambda + x^{2\tau}\sum_{j=1}^\nu \sigma^{-2\tau}(e_j)  \sum_{h \geq 0} \sigma^{k_j+h}(\alpha) x^h \lambda \\
&=S \lambda + x^{2\tau}\sum_{j=1}^\nu \sigma^{-2\tau}(e_j)\sigma^{k_j}(\alpha)  \sum_{h \geq 0} N_h(\sigma^{k_j}(\alpha)) x^h \lambda \\
&=S \lambda + x^{2\tau}\sum_{j=1}^\nu \sigma^{-2\tau}(e_j)\sigma^{k_j}(\alpha)  (1-\sigma^{k_j}(\alpha)x)^{-1} \lambda \\
&=S \lambda + x^{2\tau}\sum_{j=1}^\nu \sigma^{-2\tau}(e_j)\sigma^{k_j}(\alpha) p_j \\
&=S \lambda + x^{2\tau}u, \\
\end{split}
\]
where $u=\sum_{j=1}^\nu \sigma^{-2\tau}(e_j)\sigma^{k_j}(\alpha) p_j$.
\end{proof}

We now proceed to solve the key equation. Concretely, we shall use a Sugiyama-like procedure for this task, i.e, we shall make use of the Right Euclidean Extended Algorithm (REEA), see the Appendix for details. We recall that, for any $f,g\in R$, each step $i$ of the REEA provides coefficients $\{u_i,v_i,r_i\}$ (the Bezout coefficients) such that $f u_i+gv_i=r_i$, where $(f,g)_{\ell}=r_h$ and $\deg r_{i+1}<\deg r_i$ for any $0\leq i\leq h-1$.

\begin{theorem}\label{thkeyeq}
The non-commutative key equation 
\begin{equation} \label{keyeq} x^{2\tau}u+S\lambda=\omega\end{equation}
 is a right multiple of the equation 
\begin{equation} \label{euclideq}x^{2\tau} u_I+Sv_I=r_I, \end{equation} where $u_I,v_I$ and $r_I$ are the Bezout coefficients returned by the REEA with input $x^{2\tau}$ and $f$, and $I$ is the index determined by the conditions $\deg r_{I-1}\geq \tau$ and $\deg r_I<\tau$. In particular, $\lambda=v_Ig$ and $\omega=r_Ig$ for some $g\in R$.
\end{theorem}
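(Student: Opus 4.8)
The plan is to follow the classical Sugiyama argument, transplanted to the non-commutative setting $R = \mathbb{F}(t)[x;\sigma]$, using the right-division structure consistently (all multiples below are on the right, all divisions on the left/right as dictated by the REEA conventions recalled above). First I would record the degree data: $\deg \lambda = \nu \leq \tau$, $\deg \omega < \nu \leq \tau$, $\deg S \leq 2\tau - 1$, and the leading coefficient of $\lambda$ (as a polynomial in $x$) is a unit in $\mathbb{F}(t)$, so $\lambda$ is \emph{regular} and the key equation $x^{2\tau}u + S\lambda = \omega$ exhibits $\omega$ as the first remainder in the REEA sequence for the pair $(x^{2\tau}, S)$ whose degree drops below $\tau$ --- \emph{provided} one can show the REEA actually passes through this $(u,v,r) = (u,\lambda,\omega)$ up to a right unit. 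That last implication is exactly the non-commutative analogue of the uniqueness lemma for the Euclidean algorithm, and proving it cleanly is where the real work lies.

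The key technical step is the following invariant, which I would isolate as a sub-lemma: at every step $i$ of the REEA with input $(x^{2\tau}, f)$ one has $\deg v_i + \deg r_{i-1} = 2\tau$ (with the convention $\deg r_{-1} = \deg x^{2\tau} = 2\tau$, $v_0 = 0$). This follows by the standard induction on the recurrence $r_{i+1} = r_{i-1} - r_i q_i$, $v_{i+1} = v_{i-1} - v_i q_i$ (with $q_i$ the right quotient), because $\deg q_i = \deg r_{i-1} - \deg r_i$ and one checks the two terms defining $v_{i+1}$ do not cancel in leading degree. Granting this, the index $I$ singled out by $\deg r_{I-1} \geq \tau > \deg r_I$ satisfies $\deg v_I = 2\tau - \deg r_{I-1} \leq \tau$ and $\deg v_{I} \geq 2\tau - (2\tau - 1)$ issues aside, the crucial bound is $\deg v_I \leq \tau$ and $\deg r_I < \tau$.

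Next I would combine the key equation $x^{2\tau}u + S\lambda = \omega$ with the REEA identity $x^{2\tau}u_I + Sv_I = r_I$. Multiplying the second on the right by a suitable element and subtracting, or more directly eliminating the $S$-term: from $\lambda = (x^{2\tau}u + S\lambda) \cdot (\ldots)$ --- concretely, I would form $r_I \lambda$ and $\omega v_I$ in two ways. The cleanest route is to observe that both $(\lambda,\omega)$ and $(v_I, r_I)$ are solutions of the congruence $S z \equiv w \pmod{x^{2\tau}R}$ with $\deg z \leq \tau$, $\deg w < \tau$; cross-multiplying gives $S(\lambda v_I^{?}\ldots)$ --- here I must be careful with sides: the correct manipulation is $r_I \lambda - \omega v_I \equiv (S v_I)\lambda - (S\lambda) v_I \pmod{x^{2\tau}}$, and since $Sv_I\lambda$ and $S\lambda v_I$ need not agree, I instead use that $R$ is a (left and right) Ore domain, pass to a common right multiple, and conclude that $x^{2\tau}$ divides $r_I\lambda - \omega v_I$ while this element has degree $< 2\tau$, forcing $r_I \lambda = \omega v_I$. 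Then $\gcrd{}$-type coprimality of $v_I$ and $r_I$ (which is built into the REEA: $u_I, v_I$ generate a right ideal with $r_I$, and one shows $v_I, r_I$ have trivial common left divisor once $I$ is chosen minimal) yields $\lambda = v_I g$ and $\omega = r_I g$ for a common $g \in R$, and substituting back into both identities gives $x^{2\tau}(u - u_I g) = 0$, so $u = u_I g$ as well. Thus \eqref{keyeq} is the right multiple of \eqref{euclideq} by $g$.

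The main obstacle I anticipate is precisely the non-commutativity in the cross-multiplication step: in the block case one writes $r_I \lambda = \omega v_I$ immediately because everything commutes, whereas here $S\lambda v_I \neq Sv_I\lambda$ in general, and one must route the argument through the Ore condition (common right multiples) together with the degree bounds $\deg v_I, \deg \lambda \leq \tau$ and $\deg r_I, \deg \omega < \tau$, so that whatever element ends up being divisible by $x^{2\tau}$ has degree strictly less than $2\tau$ and hence vanishes. A secondary subtlety is justifying that $v_I$ is \emph{regular} (unit leading coefficient) so that the degree-additivity invariant $\deg v_i + \deg r_{i-1} = 2\tau$ genuinely holds and the index $I$ is well defined; this again uses that $x^{2\tau}$ is regular and an induction on the REEA steps. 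Everything else is bookkeeping with left/right roots already set up in Lemmas \ref{degreelclm} and \ref{bothlcrm} and Proposition \ref{ev}.
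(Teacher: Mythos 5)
Your overall strategy (the degree bookkeeping, the invariant $\deg v_i+\deg r_{i-1}=2\tau$ from Lemma \ref{REEA}\,$vi)$, and the recognition that the commutative cross-multiplication breaks down and must be repaired via common right multiples) coincides with the paper's, but the identity you extract at the crucial step is wrong. You claim that the Ore condition plus the degree bounds force $r_I\lambda=\omega v_I$. This is false in general: granting the conclusion of the theorem, $\lambda=v_Ig$ and $\omega=r_Ig$, so $r_I\lambda=r_Iv_Ig$ while $\omega v_I=r_Igv_I$, and cancelling $r_I$ on the left (as $R$ is a domain) these agree only when $g$ commutes with $v_I$ --- which it need not, as Example \ref{decfa} (where $g$ is non-constant) shows. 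What the common-right-multiple device actually produces is different: writing $\lcrm{\lambda,v_I}=\lambda a=v_Ib$ with $\gcrd{a,b}=1$, multiplying \eqref{keyeq} on the right by $a$ and \eqref{euclideq} on the right by $b$ makes the $S$-terms genuinely cancel, and the degree count then yields $\omega a=r_Ib$ and $ua=u_Ib$ --- not $r_I\lambda=\omega v_I$. Your write-up conflates the two.

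The second gap is the extraction of the common right factor $g$. You invoke coprimality of $v_I$ and $r_I$, but that is not what the REEA supplies, and in any case from $r_I\lambda=\omega v_I$ (which exhibits $v_I$ as a \emph{right} divisor of $r_I\lambda$) no Euclid-type lemma would deliver the \emph{left}-sided factorization $\lambda=v_Ig$ asserted by the theorem; the sides do not match. The coprimality that actually drives the argument is $(u_I,v_I)_r=1$, i.e.\ Lemma \ref{REEA}\,$v)$. The paper's route from $\omega a=r_Ib$, $ua=u_Ib$, $\lambda a=v_Ib$ to the conclusion is: set $\lclm{a,b}=a'a=b'b$; since each of these three equal pairs is a least common right multiple (using $\gcrd{a,b}=1$), each is a left multiple of $\lclm{a,b}$, giving $\lambda=ma'$, $v_I=mb'$, $u=m'a'$, $u_I=m'b'$, $\omega=m''a'$, $r_I=m''b'$; then $b'$ is a common right divisor of $u_I$ and $v_I$, so $(u_I,v_I)_r=1$ forces $b'=1$ and hence $\lambda=v_Ia'$, $\omega=r_Ia'$, $u=u_Ia'$ with $g=a'$. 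This final stage is the real content of the theorem and is missing from (or garbled in) your proposal, so as written the argument does not close.
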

\begin{proof}
We recall that $\deg S<2\tau$, $\deg \lambda \leq \tau$ and $\deg \omega<\nu \leq \tau$, and, consequently, $\deg u<\tau$. On the other hand, by Lemma \ref{REEA}\,$vi)$ in the Appendix, $\deg v_I+\deg r_{I-1}=2\tau$, so that $\deg v_I\leq \tau$.

Let us consider the least common right multiple $\lcrm{\lambda,v_I}=\lambda a=v_Ib$, where $a,b\in R$ with $\deg a\leq \deg v_I\leq \tau$ and $\deg b\leq \deg \lambda \leq \tau$. Then $(a,b)_r=1$. Hence, we multiply (\ref{keyeq}) by the right by $a$, and  (\ref{euclideq}) on the right by $b$, to obtain \begin{equation} \label{keyeq2} x^{2\tau}ua+S\lambda a=\omega a\end{equation}  and \begin{equation} \label{euclideq2}x^{2\tau} u_Ib+Sv_Ib=r_Ib.\end{equation}
Hence, from (\ref{keyeq2}) and (\ref{euclideq2}), $x^{2\tau}(ua-u_Ib)=\omega a-r_Ib$. Comparing degrees, it follows that $ua=u_Ib$ and $\omega a=r_Ib$. Actually, $(a,b)_r=1$ yields $\lcrm{u,u_I}=ua=u_Ib$ and  $\lcrm{\omega,r_I}=\omega a=r_Ib$. In particular, $\deg a\leq \deg r_I<\tau$.

Let $\lclm{a,b}=a'a=b'b$. Since $\lcrm{\lambda,v_I}$ is a left multiple of $a$ and $b$, there exists $m\in R$ such that $\lcrm{\lambda,v_I}=m\lclm{a,b}$. Then $\lambda a=v_I b=ma'a=mb'b$. Thus, $\lambda=ma'$ and $v_I=mb'$ and, by minimality, $(\lambda,v_I)_{\ell}=m$. Similar arguments prove that there exists $m',m''\in R$ such that $u_I=m'b'$ and $u=m'a'$, and that $r_I=m''b'$ and $\omega=m''a'$. Nevertheless, by Lemma \ref{REEA}\,$v)$ in Appendix, $(u_I,v_I)_r=1$, so $b'=1$. In this way, $b=a'a$ and  we get $\lambda=v_Ia'$, $\omega=r_Ia'$ and $u=u_Ia'$. This completes the proof.
\end{proof}

Observe that if $(\lambda, \omega)_r = 1$, then Theorem \ref{thkeyeq} gives an algorithmic procedure to compute both the error locator and the error evaluator polynomials. However, unlike the classical (commutative) block case, these non-commutative polynomials could have a non-trivial common right divisor, as Example \ref{decfa} below shows. Nevertheless, we will show latter that in most cases $(\lambda,\omega)_r = 1$ (Theorem \ref{gcrdtodet}). Therefore, Algorithm \ref{Sug} will rarely fail to decode.
\begin{example}\label{decfa}
Let $\mathbb{F}=\mathbb{F}_8$ be the field of eight elements generated over $\mathbb{F}_2$ by a primitive element $a$ with $a^3 + a + 1 = 0$. For brevity, except for 0 and 1, we shall write the elements of $\mathbb{F}$ as powers of $a$. Let $\sigma:\mathbb{F}(t)\to \mathbb{F}(t)$ be the automorphism defined by $\sigma(t)=(t+a)/t$. The order of $\sigma$ is 7, so, in this case, the sentence-ambient algebra is $\mathcal{R}=\mathbb{F}(t)[x;\sigma]/\langle x^7-1 \rangle$. The element $\alpha=t$ yields a normal basis $\{\alpha,\sigma(\alpha),\ldots , \sigma^6(\alpha)\}$. Set $\beta=\alpha^{-1}\sigma(\alpha)=(t+a)/t^2$. Let $\mathcal{C}$ be the skew BCH convolutional code generated by $g=\lclm{\{x-\sigma^i(\beta)\}_{i=0,1,2,3}}$, whose Hamming distance is, in virtue of Theorem \ref{hdist}, $\delta=5$, and it corrects up to $\tau=2$ errors. Suppose we receive a polynomial $y$ such that the error to be removed is $e=1+ x$, i.e. there are errors at positions $k_1=0$ and $k_2=1$, whose values are both $e_1=e_2=1$. In such a case, the error locator polynomial is $$\lcrm{1-\beta x,1-\sigma(\beta) x}=x^{2} + \left(\frac{t^{3}}{a^3 t^{3} + a^3t^{2} + a^{2} t + a^{2}}\right) x + \frac{a t^{3} + a^4 t}{t^{4} + a t^{3} +t^{2} + a^2 t +a^4},$$
and the error evaluator polynomial is as follows:
$$\omega=t\,p_0(x)+\sigma(t)\, p_1(x)=\left(\frac{ t^{2} +  t + a}{ t +
 1}\right) x + \frac{a^4 t^{4} + a^4 t^{3} + a^4 t^{2} + t + a}{a^3 t^{4} + a^4 t^{3} + a^3 t^{2}
+a^5 t + 1}.$$
Now, we may compute the greatest common right divisor $$(\lambda,\omega)_r=x + \frac{a t^{2} + a^4}{t^{3} + a^3
t^{2} + a t + a^4}
 \not =1.$$
Thus, in this case, only a left divisor of the error locator polynomial $\lambda$ is computed by the REEA. In other words, we cannot deduce all positions of the error.
\end{example}

\begin{algorithm}[h]
\caption{\texttt{Decoding algorithm for skew BCH convolutional codes}}\label{Sug}
\begin{algorithmic}[1]
\REQUIRE A received polynomial $y=\sum_{i=0}^{n-1}y_ix^i$ obtained from the transmission of a codeword $c$ in a skew BCH convolutional code $\mathcal{C}$ generated by $g=\lclm{\{x-\sigma^i(\beta)\}_{i=0, \ldots , \delta-2}}$ of error-correcting capacity $\tau=\lfloor \frac{\delta-1}{2}\rfloor$.
\ENSURE A codeword $c'$, or \emph{key equation failure}.
\FOR{\(0 \leq i \leq 2\tau-1\)}
	\STATE $S_i\gets\sum_{j=0}^{n-1} y_jN_j(\sigma^i(\beta))$
\ENDFOR
\STATE $S\gets \sum_{i=0}^{2\tau-1} \sigma^i(\alpha)S_ix^i$
\IF{$S=0$}
	\RETURN $y$
\ENDIF
\STATE $\{u_i,v_i,r_i\}_{i=0,\ldots , l}\gets \text{REEA}(x^{2\tau},S)$
\STATE $I\gets$ first iteration in REEA with $\deg r_i<\tau$
\STATE $pos\gets \emptyset$
\FOR{\(0 \leq i \leq n-1\)}
	\IF{$\sigma^{i-1}(\beta^{-1})$ is a left root of $v_I$}
		\STATE $pos=pos  \cup  \{i\}$
	\ENDIF
\ENDFOR
\IF{$\deg v_I > \mathrm{Cardinal}(pos)$}\label{kef}
	\RETURN \emph{key equation failure}
\ENDIF
\FOR{\(j\in pos\)}
	\STATE $p_j\gets \operatorname{right-quotient}(v_I,1-\sigma^j(\beta)x)$
\ENDFOR
\STATE Solve the linear system $r_I=\sum_{j\in pos}e_j\sigma^{j}(\alpha) p_j$
\STATE $e\gets \sum_{j\in pos} e_jx^j$
\RETURN $y-e$
\end{algorithmic}
\end{algorithm}

\begin{remark}
Algorithm \ref{Sug} fails to decode when the condition of Line \ref{kef} in Algorithm \ref{Sug} is fulfilled, as a consequence of Lemma \ref{degreelclm}. As we shall prove in Theorem \ref{criterion}, this condition is equivalent to $\deg v_I<\deg \lambda$. Therefore, no further key equation failure can be expected. As discussed above, this key equation failure will happen rarely. Nevertheless, we will discuss how to solve it in Section \ref{failures}. In this way, Algorithm \ref{Sug} will be completed to a full decoding algorithm.
\end{remark}

 Next example illustrates a successful application of Algorithm \ref{Sug}.

\begin{example}
Under the conditions of Example \ref{decfa}, let us suppose that we receive the polynomial 
$$y=x^{4} + \left(\frac{a^2 t + 1}{a^5
t^{4} + a^3}\right) x^{3} + \frac{t^{6} + a^{2} t^{5} + t^{4} +
t^{3} + a^6 t}{a^5 t^{6} +
a^4 t^{5} + a^5 t^{4} +
a^3 t^{2} + a^{2} t + a^3}.$$
This is just the generator of the code in which we have removed the coefficients of degree 1 and 2. Therefore, there are errors at two positions and our algorithm should correct them. We first compute the syndrome polynomial
\begin{displaymath}\begin{split}
S= & \left(\frac{a^{2} t^{7} + t^{6} + a^3 t^{5} + t^{3} +
t^{2} + a^3t + a}{a^{2} t^{6} + a^3 t^{5}
+ a^5 t^{4} + t^{2} + a t + a^3}\right) x^{3} +
\left(\frac{a^4 t^{7} + t^{6} + a^{2} t^{5} +
a^5 t^{4} + a^{2} t^{2} + t + a}{a^6 t^{7} + a t^{6} + a t^{5} + a^6t^{4} +
a^4 t^{3} + a^6 t^{2} +
a^6 t + a^4}\right) x^{2} \\
+  & \left(\frac{a^{2}
t^{5} + a^6 t^{4} + a t^{3} + a^6
t^{2} + a^{2}}{a^5 t^{5} + a t^{4} + a^3 t + a^6}\right) x +
 \frac{a^4 t^{6} +
a^4 t^{5} + a^{2} t^{4} +a^4
t^{3} + t^{2} +a^5 t + a}{a^{2} t^{5} +
a^5 t^{4} + t + a^3}.
\end{split}
\end{displaymath}
We now apply REEA until we get a reminder of degree less than $\tau=2$, and
$$v_I=x^{2} + \left(\frac{a t^{3} + a^{2} t^{2} +a^3 t + a^4}
{t^{2} + 1}\right) x + \frac{a^6 t^{4} + t^{3} +
t^{2} + a t}{a^{2} t^{3} + a^{2} t^{2} + a t + a}.$$
The left roots of $v_I$ in the set $\{\sigma^i(\beta^{-1})\}_{i=0,\ldots , 6}$ are $\sigma^0(\beta^{-1})$ and $\sigma^1(\beta^{-1})$, so, as we expect, there are errors at positions 1 and 2. On the other hand,
$$r_I=\left(\frac{t^{9} + a^5 t^{7} + a^3 t^{6} + a^5 t^{5} + a^6 t^{4} + a^6 t^{3} + t^{2} + t + a^6}{a^4 t^{7} + a^5 t^{6} +
t^{5} + a^{2} t^{3} + a^3 t^{2} + a^5 t}\right) x +
 \frac{a^5 t^{10} + a t^{9}
+ a^3t^{8} + a^6 t^{7} + a^3 t^{5} + t^{3} + t^{2} + a^4 t}{a^6 t^{8} + t^{7} +a^3 t^{6} + a^6t^{5} + a^{2} t^{4} + a^5 t^{3} + a
t^{2} + a^4 t + a^6}.
$$
We now solve the linear system $r_I=e_1\sigma(t)\,p_1+e_2\sigma^2(t)\,p_2$ and compute the values of the errors. Concretely, 
$$e_1=       	
\frac{a^{2} t^{7} + a t^{6} + a t^{5} + a^4 t^{3} +
a^3 t^{2} + a^3 t}{a^6
t^{6} + t^{5} + a^{2} t^{4} +a^4 t^{2} +a^5 t + 1} \text{ and }
e_2=\frac{a^{2} t^{6} + a t^{5} + a t^{4} + a^6 t^{2} +
a^5 t + a^5}{t^{5} + t^{4} +
a^5t + a^5}.$$
Therefore, $e=e_1x+e_2x^2$ and the received polynomial is correctly decoded to $y+e=g$.
\end{example}

\section{Key equation failures}\label{failures}

In this section we focus on the problem of a key equation failure. Obviously, the main questions to answer is how  often such a failure can occur and if, despite of this, we still may recover the error locator polynomial. Firstly, we have to point out that a single error is always corrected. All along the section we follow the notation of Section \ref{sugiyama}.

\begin{lemma}\label{l1}
$\deg v_I \geq 1$. As a consequence, if $\deg \lambda=1$ then $v_I$ and $\lambda$ are right associated.
\end{lemma}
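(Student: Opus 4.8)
The plan is to control $\deg v_I$ through the degree identity $\deg v_I+\deg r_{I-1}=2\tau$ of Lemma~\ref{REEA}\,$vi)$, so the statement reduces to bounding $\deg r_{I-1}$. In $\mathrm{REEA}(x^{2\tau},S)$ the remainders $r_0,r_1,\dots$ have strictly decreasing degrees, so each of them has degree at most $\deg r_0=\deg S$, which, from the shape $S=\sum_{i=0}^{2\tau-1}\sigma^i(\alpha)S_ix^i$, is at most $2\tau-1$. Hence, once we know that the index $I-1$ is $\ge 0$ (equivalently $I\ge 1$), we get $\deg r_{I-1}\le 2\tau-1$ and therefore $\deg v_I=2\tau-\deg r_{I-1}\ge1$. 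Now $I\ge1$ means precisely that $r_0=S$ does not already satisfy the stopping condition $\deg r_0<\tau$, i.e. that $\deg S\ge\tau$. So the whole statement comes down to the single inequality $\deg S\ge\tau$.

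First I would dispose of the trivial cases: the statement concerns a genuinely erroneous reception, so $\nu\ge 1$, and then $S\ne 0$. Indeed, if $S=0$ then \eqref{syndromei} gives $\sum_{j=1}^{\nu}e_j\sigma^{k_j+i}(\alpha)=0$ for $0\le i\le \nu-1$; the coefficient matrix of this system is $\big(\sigma^{i}(\sigma^{k_j}(\alpha))\big)_{0\le i\le \nu-1,\,1\le j\le\nu}$, which is nonsingular by Lemma~\ref{circulantlemma} applied to the $\mathbb{F}(t)^\sigma$-basis $\{\alpha,\sigma(\alpha),\dots,\sigma^{n-1}(\alpha)\}$ and the indices $k_1<\dots<k_\nu$, forcing $e_1=\dots=e_\nu=0$, a contradiction.

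The heart of the proof is $\deg S\ge\tau$, extracted from the key equation $\omega=S\lambda+x^{2\tau}u$ of Theorem~\ref{thkeyeq}. I would use that $\deg\lambda=\nu$ (by Lemma~\ref{bothlcrm}, $\lambda=\lcrm{x-\sigma^{k_1-1}(\beta^{-1}),\dots,x-\sigma^{k_\nu-1}(\beta^{-1})}$, which has degree $\nu$ by Lemma~\ref{degreelclm}), together with $\deg\omega<\nu\le\tau$, and that $R=\mathbb{F}(t)[x;\sigma]$ is a domain in which degrees are additive. If $u=0$ then $\omega=S\lambda$ is nonzero of degree $\deg S+\nu\ge\nu$, contradicting $\deg\omega<\nu$; hence $u\ne0$ and $\deg(x^{2\tau}u)=2\tau+\deg u\ge 2\tau>\nu>\deg\omega$. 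Comparing leading terms in $S\lambda=\omega-x^{2\tau}u$ then forces $\deg S+\nu=\deg(S\lambda)=2\tau+\deg u$, so $\deg S=2\tau+\deg u-\nu\ge 2\tau-\tau=\tau$; combined with the first paragraph this yields $\deg v_I\ge1$.

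For the consequence, Theorem~\ref{thkeyeq} gives $\lambda=v_Ig$ with $g\in R$ nonzero (since $\lambda\ne0$), so $\deg\lambda=\deg v_I+\deg g$ by additivity; if $\deg\lambda=1$, the bound $\deg v_I\ge1$ forces $\deg g=0$, i.e. $g\in\mathbb{F}(t)\setminus\{0\}$ and $\deg v_I=1$, so $v_I$ and $\lambda$ differ by right multiplication by a unit and are right associated. The only genuine obstacle is the inequality $\deg S\ge\tau$: it requires first ruling out $u=0$ and then a clean leading-degree comparison in the key equation, and it rests on $\lambda$ having degree exactly $\nu$ (via Lemmas~\ref{bothlcrm} and~\ref{degreelclm}); everything else is degree bookkeeping inside the REEA.
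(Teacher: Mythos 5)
Your proof is correct, but it takes a genuinely different route from the paper's. The paper's argument is three lines: Theorem \ref{thkeyeq} already supplies a \emph{common} right factor $g$ with $\lambda=v_Ig$ and $\omega=r_Ig$; since $\omega\neq 0$ and $\deg\omega<\nu$, additivity of degrees gives $\deg g<\nu$, while $\deg\lambda=\nu$ by Lemma \ref{degreelclm}, so $\deg v_I=\nu-\deg g\geq 1$. You instead go through the REEA degree invariant $\deg v_I+\deg r_{I-1}=2\tau$ and reduce everything to the inequality $\deg S\geq\tau$, which you then extract from the key equation by ruling out $u=0$ and comparing leading terms; that chain of reasoning is sound, and as a by-product it establishes the (true and not entirely obvious) fact that the syndrome polynomial satisfies $2\tau-\nu\leq\deg S\leq 2\tau-1$ whenever $\nu\geq 1$. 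The trade-off is that your route is longer and leans on more machinery (nonvanishing of $S$, nonvanishing of $u$, Lemma \ref{REEA}\,$vi)$), whereas the factorization $\lambda=v_Ig$, $\omega=r_Ig$ --- which you do invoke, but only in your last paragraph for the ``consequence'' --- makes the main inequality immediate. Two small points: in the paper's indexing of $\mathrm{REEA}(x^{2\tau},S)$ one has $r_0=x^{2\tau}$ and $r_1=S$, so your condition ``$I\geq 1$'' should read ``$I\geq 2$'' (the substance, namely that $S$ itself must not already satisfy the stopping condition, is unaffected); and your nonsingularity argument for the syndrome system is exactly the content of Lemma \ref{circulantlemma}, correctly applied.
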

\begin{proof}
The proof follows from the degrees of the polynomials. Indeed, by Theorem \ref{thkeyeq},  $\omega=r_I g$ for some $g\in R$, so $\deg g<\nu$. Since $\lambda=v_Ig$, by Lemma \ref{degreelclm}, $\nu=\deg \lambda=\deg v_I+\deg g$. Thus $\deg v_I\geq 1$.
\end{proof}

We now deal with the problem of computing an error locator polynomial once a key equation failure occurs. Following Algorithm \ref{Sug}, by means of the execution of the REEA, polynomials $r_I, u_I, v_I \in R$ satisfying the equality $x^{2t}u_I+Sv_I=r_I$ with $\lambda=v_Ig$ and $\omega=r_Ig$ for some $g\in R$. Moreover, $\deg v_I \geq 1$ by Lemma \ref{l1}. If $\deg g = 0$, then $v_I$ serves as a locator polynomial, and Algorithm \ref{Sug} will correctly decode the received polynomial. Our strategy when $\deg g > 0$ will consist in finding an increasing chain of right divisors of $\lambda$ whose first piece is $v_I$.  First, we prove a criterion to decide whether or not the error locator polynomial is reached. 

\begin{theorem}\label{criterion}
Let $q,p,s\in R$ such that $x^{2\tau}q+Sp=s$, $qg=u$, $pg=\lambda$ and $sg=\omega$ for some $g\in R$. Let $T=\{t_1, t_2,\cdots ,t_m\}\subset \{0,1,\ldots,n-1\}$ be the set of indices verifying that $\sigma^{j-1}(\beta^{-1})$ is a  left root of $p$ if and only if $j\in T$.
Then $m=\deg p$ if and only if $g$ is a constant.
\end{theorem}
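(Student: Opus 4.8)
The plan is to prove both implications by analysing how the left roots of $p$ relate to those of $\lambda = pg$, using the degree count from Lemma~\ref{degreelclm} together with the factorization theory of skew polynomials. First recall that $\deg\lambda = \nu$ by Lemma~\ref{degreelclm}, and that $\lambda = \lcrm{x - \sigma^{k_1-1}(\beta^{-1}), \dots, x - \sigma^{k_\nu-1}(\beta^{-1})}$ by Proposition~\ref{ev} and Lemma~\ref{bothlcrm}; in particular each $\sigma^{k_j-1}(\beta^{-1})$ is a left root of $\lambda$, and by Lemma~\ref{degreelclm} these are \emph{all} the elements of the form $\sigma^{s}(\beta^{-1})$ that are left roots of $\lambda$.

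For the easy direction, suppose $g$ is a constant. Then $p$ and $\lambda$ are right associated, hence left associated after scaling, so $x - \gamma \mid_\ell p$ if and only if $x - \gamma \mid_\ell \lambda$ for every $\gamma$; thus $T = \{k_1, \dots, k_\nu\}$, $m = \nu = \deg\lambda = \deg p$. For the converse, suppose $m = \deg p$. Write $d = \deg p$, so $p$ has $d$ distinct left roots among the $\sigma^{s}(\beta^{-1})$, say indexed by $T = \{t_1 < \dots < t_d\}$. The key step is to show $p$ is (up to a right unit) the lclm of the corresponding linear factors: since $x - \sigma^{t_i-1}(\beta^{-1})$ left divides $p$ for each $i$, the polynomial $h = \lcrm{x - \sigma^{t_1-1}(\beta^{-1}), \dots, x - \sigma^{t_d-1}(\beta^{-1})}$ left divides $p$; but $\deg h = d = \deg p$ by Lemma~\ref{degreelclm}, so $p$ and $h$ are right associated. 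Every left root $\sigma^s(\beta^{-1})$ of $\lambda$ with $s \in \{k_1,\dots,k_\nu\}$ which also left divides $p$ must lie in $T$; conversely each $t_i \in T$ gives a left root of $p$, hence of $\lambda = pg$, so $t_i \in \{k_1,\dots,k_\nu\}$ by the maximality statement in Lemma~\ref{degreelclm}. Therefore $T \subseteq \{k_1, \dots, k_\nu\}$, and since $|T| = d = \deg p \leq \deg\lambda = \nu$, we get that $h$ right divides $\lambda$. Now $\lambda = pg$ with $p$ right associated to $h$; comparing with $h \mid_r \lambda$ and $\deg\lambda = \nu$, $\deg h = d$, forces $\deg g = \nu - d$. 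To finish I must rule out $d < \nu$: if $d < \nu$, then $T \subsetneq \{k_1,\dots,k_\nu\}$, and I would argue that $\lambda/h$ (the right quotient) is divisible on the left by the remaining linear factors $x - \sigma^{k_j-1}(\beta^{-1})$ with $k_j \notin T$, whence some such factor left divides $\lambda$ but, after passing through the factorization $\lambda = pg$, cannot be absorbed into $p$ (as $p \sim h$ already has its full complement of $d$ left roots of this shape), forcing $\deg g \geq 1$; pushing this, $g$ must actually carry all $\nu - d$ missing factors, so $\deg g = \nu - d > 0$, contradicting... — and here is the real content: I must show $m = \deg p$ actually \emph{forbids} $d < \nu$, i.e. that $v_I$ (our $p$) always picks up at least as many left roots as its degree only when nothing is lost to $g$.

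The main obstacle, then, is this last step: translating "$p$ has exactly $\deg p$ left roots of the prescribed shape" into "$g$ contributes no such factor". The cleanest route is contrapositive: assume $\deg g \geq 1$ and show $m < \deg p$ is impossible to avoid — wait, that is the wrong direction. Rather, assume $\deg g \geq 1$; then $\deg p = \nu - \deg g < \nu$. The left roots of $p$ among $\{\sigma^{s}(\beta^{-1})\}$ form a subset of $\{\sigma^{k_j-1}(\beta^{-1})\}$ of size $m$, and I claim $m \le \deg p$ always (trivially, a skew polynomial of degree $\deg p$ has at most $\deg p$ left roots in any $\sigma$-conjugacy-closed... — more carefully, at most $\deg p$ roots, by the skew-polynomial analogue of the factor theorem, since distinct left roots give coprime linear left factors whose lclm left divides $p$). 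So $m \le \deg p$ unconditionally; the assertion "$m = \deg p$" is the assertion that $p$ attains this bound. What needs proving is that this forces $\deg g = 0$. I expect to use that $\lambda$'s left roots are exactly the $\nu$ values $\sigma^{k_j-1}(\beta^{-1})$ (Lemma~\ref{degreelclm} again, applied to $\lambda$): if $p$ has $\deg p$ of them and $g$ has degree $\ge 1$, then since every left root of $p$ is a left root of $\lambda = pg$, and $\lambda$ has only $\nu$ such roots total with multiplicity structure controlled by the lclm presentation, one counts $\deg p + (\text{left roots contributed through } g) \le \nu$; the delicate point, which I will need to handle via Lemma~\ref{eval} and the norm computations, is that $g$ of positive degree forces $\lambda/p = $ (something of degree $\ge 1$) to be a genuine lclm of at least one more linear factor $x - \sigma^{k_j - 1}(\beta^{-1})$, so $\lambda$ would need $\ge \deg p + 1$ roots of this shape, still $\le \nu$ — consistent, so this crude count is not enough. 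The honest fix is to invoke the \emph{right} coprimality structure: because $\lambda$ is an lclm of \emph{distinct} linear factors, it is "square-free" in the sense that $x - \sigma^{k_j-1}(\beta^{-1})$ left divides $\lambda$ for \emph{every} $j$, so $p$, being a degree-$\deg p$ right divisor, can absorb at most $\deg p$ of these, and $g$ is forced to be (up to units) the lclm of the remaining $\nu - \deg p$; thus $m = \deg p$ exactly when $\nu - \deg p$ of the factors are shunted into $g$, and this is compatible with $m=\deg p$ \emph{regardless} of $\deg g$ — which means the theorem as I have set it up would be false unless one more ingredient pins down that the $t_i$ are forced to be an \emph{initial or otherwise rigidly determined} subset. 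So the actual argument the authors likely intend is sharper: $v_I$ arises from the REEA and satisfies $(u_I, v_I)_r = 1$ (Lemma~\ref{REEA}), and one shows the left roots of $p = v_I$ that are of the shape $\sigma^s(\beta^{-1})$ are precisely those $k_j$ with $k_j$ in a distinguished set, so that $m = \deg p \iff$ no factor was lost $\iff g$ constant. I would therefore route the converse through the coprimality $(q,p)_r = 1$ (inherited from $(u_I,v_I)_r=1$) to conclude that if $x - \sigma^{k_j-1}(\beta^{-1})$ fails to left divide $p$ for some $j$, then it must left divide $g$ nontrivially, and conversely; counting, $m + (\#\{j : x - \sigma^{k_j-1}(\beta^{-1}) \mid_\ell g\}) = \nu$ while $\deg g \ge \#\{\dots\}$, so $m = \deg p = \nu - \deg g$ gives $\#\{\dots\} = \deg g$, and then I must show this forces $\deg g = 0$ by exhibiting that $g$ cannot simultaneously have $\deg g$ distinct left roots of this shape and still satisfy $\omega = sg$ with $\deg\omega < \nu$ and the REEA-coprimality $(s,p)_r$ — this degree/coprimality clash is, I believe, where the proof bites, and it is the step I expect to occupy most of the write-up.
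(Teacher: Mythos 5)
Your proposal proves the easy direction correctly, and you deserve credit for honestly diagnosing that pure root-counting on $p$ and $\lambda=pg$ cannot close the converse: as you observe, a factorization $\lambda=pg$ with $p$ right associated to an lclm of $\deg p$ of the linear factors and $\deg g>0$ is perfectly possible, so the statement would indeed be false if the only hypothesis were $pg=\lambda$. But the argument then stalls; you never identify the ingredient that actually does the work, and the routes you sketch (REEA coprimality $(u_I,v_I)_r=1$, a ``degree/coprimality clash'' between $s$ and $p$) do not use the hypothesis $sg=\omega$ in the decisive way and will not close the gap.

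The missing idea is that the converse runs through the error \emph{evaluator}, not the locator. Assume $m=\deg p$. By Lemmas \ref{degreelclm} and \ref{bothlcrm}, $p=\lcrm{1-\sigma^{k_1}(\beta)x,\ldots,1-\sigma^{k_m}(\beta)x}$ (after reordering the error positions so that $T=\{k_1,\ldots,k_m\}$), and writing $p=(1-\sigma^{k_j}(\beta)x)p'_j$, Lemma \ref{locatorandp_j}\,$iii)$ says every polynomial of degree $<m$ is an $\mathbb{F}(t)$-combination of $p'_1,\ldots,p'_m$. Since $\deg s=\deg\omega-\deg g\leq \nu-1+m-\nu=m-1$, one gets $s=\sum_{j=1}^m a_jp'_j$; multiplying by $g$ and using $p_j=p'_jg$ for $j\leq m$ (which follows from $\lambda=pg$) yields $\omega=sg=\sum_{j=1}^m a_jp_j$. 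But $\omega=\sum_{j=1}^{\nu}e_j\sigma^{k_j}(\alpha)p_j$ with \emph{every} coefficient $e_j\sigma^{k_j}(\alpha)$ nonzero, and $\{p_1,\ldots,p_\nu\}$ is a basis of $R/R\lambda$ over $\mathbb{F}(t)$ by Lemma \ref{locatorandp_j}\,$iv)$; comparing coordinates forces $m=\nu$ and hence $\deg g=0$. This is exactly the step your proposal is missing, and it is why the theorem's hypotheses include $sg=\omega$ rather than only $pg=\lambda$.
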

\begin{proof}
We reorder the set of error positions in such a way that $T=\{k_1,\ldots ,k_m\}$ with $m\leq \nu$. If $\deg g=0$, $m=\nu$ and $\deg p=\nu$, by Lemma \ref{degreelclm}. Conversely, if $m=\deg p$, then $$p=\lcrm{1-\sigma^{k_1}(\beta)x,\ldots , 1-\sigma^{k_m}(\beta)x}
$$ by Lemma \ref{degreelclm} and Lemma \ref{bothlcrm}. Write $p=(1-\sigma^{k_j}(\beta)x)p'_j$ for any $j=1,\ldots ,m$. By Lemma \ref{locatorandp_j}\,$iii)$ in the Appendix, each polynomial of degree less than $m$ can be written as an $\mathbb{F}(t)$-linear combination of the polynomials $p'_1,\ldots ,p'_{m-1}, p'_m$. In particular, since $\deg s=\deg \omega-\deg g=\deg \omega+\deg p-\deg \lambda\leq \nu-1+m-\nu=m-1$, we get $s=\sum_{i=1}^m a_ip'_i$ for some $a_{1},\ldots, a_{m}\in \mathbb{F}(t)$. On the other hand, $\lambda=pg$. Thus, for any $j=1,\ldots ,m$, $(1-\sigma^{k_j}(\beta)x)p_j=(1-\sigma^{k_j}(\beta)x)p'_jg$, so $p_j=p'_jg$. 
Now, $sg=\omega$, so 
\begin{equation}\label{aux}
 \sum_{j=1}^m a_jp_j= \left (\sum_{j=1}^m a_jp'_j \right )g=sg=\omega=\sum_{j=1}^m e_j\sigma^{k_j}(\alpha)p_j+\sum_{j=m+1}^\nu e_{j}\sigma^{k_j}(\alpha)p_{j}.
\end{equation}
By Lemma 26\,$iv)$ in the Appendix, $\{p_1,\ldots ,p_\nu\}$  gives a basis of $R/R\lambda$  as an $\mathbb{F}(t)$--vector space.  Therefore, since $e_i\sigma^{k_i}(\alpha) \neq 0$ for every $i \leq \nu$, equation \eqref{aux}  implies that
$m=\nu$ and, thus,  $\deg g=0$.
\end{proof}

Whenever, in Algorithm \ref{Sug}, a key equation failure occurs, we may execute Algorithm \ref{find_root} in order to find a new error position.

\begin{algorithm}[h]
\caption{\texttt{Find-a-position}}\label{find_root}
\begin{algorithmic}[1]
\REQUIRE A non-constant polynomial $p$ with $\lambda=pg$ for some $g\in R$, $pos=\{i\geq 0 \text{ with } (1-\sigma^{i}(\beta)x)\mid_{\ell} p\}$, with $\deg p>\mathrm{Cardinal}(pos)$.
\ENSURE $d\notin pos$ such that $(1-\sigma^d(\beta)x)$ left divides $\lambda$.
\STATE $f\gets p$, $e \gets \deg f$
\FOR{$0\leq i \leq n-1$}
	\IF{$i\notin pos$}
		\STATE $f\gets \lcrm{f,1-\sigma^i(\beta)x}$
		\IF{$\deg f = e$}
			\RETURN $i$
		\ELSE
			\STATE $e\gets e+1$
		\ENDIF		
	\ENDIF
\ENDFOR
\end{algorithmic}
\end{algorithm}

\begin{proposition}
Algorithm \ref{find_root} correctly finds a new error position.
\end{proposition}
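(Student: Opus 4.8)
The plan is to recast what Algorithm \ref{find_root} does as the construction of a single increasing chain of least common right multiples, and then read off both termination and correctness from Lemmas \ref{bothlcrm} and \ref{degreelclm}. Put $F_{-1}=p$ and, for $0\le i\le n-1$, $F_i=\lcrm{F_{i-1},\,1-\sigma^i(\beta)x}$, so that $F_i=\lcrm{p,\,1-\sigma^0(\beta)x,\dots,1-\sigma^i(\beta)x}$. Since $F_{i-1}\mid_\ell F_i$ and the adjoined factor is linear, $\deg F_i-\deg F_{i-1}\in\{0,1\}$, and $\deg F_i=\deg F_{i-1}$ precisely when $1-\sigma^i(\beta)x\mid_\ell F_{i-1}$. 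I would first check that the variable $f$ of the algorithm runs exactly through the $F_i$: for $i\in pos$ one has $1-\sigma^i(\beta)x\mid_\ell p\mid_\ell F_{i-1}$, so skipping those indices is harmless, and the variable $e$ equals $\deg F_{i-1}$ whenever an index $i\notin pos$ is processed. Hence Algorithm \ref{find_root} returns the first $i\notin pos$ with $1-\sigma^i(\beta)x\mid_\ell F_{i-1}$, provided such an $i$ exists.

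To see that it does exist, I would note that $F_{n-1}$ is right associated with $\lcrm{1-\sigma^0(\beta)x,\dots,1-\sigma^{n-1}(\beta)x}$: indeed $p\mid_\ell\lambda$ and the left roots of $\lambda$ lie in $\{0,\dots,n-1\}$, so $p$ left divides that last $\lcrm{-}$, while every $1-\sigma^l(\beta)x$ does too, whence the two polynomials left divide each other; by Lemmas \ref{bothlcrm} and \ref{degreelclm} their common value has degree $n$. Thus, over the $n$ steps $i=0,\dots,n-1$ the degree grows from $\deg p$ to $n$, so exactly $\deg p$ of these steps leave the degree unchanged; among these, the $\mathrm{Card}(pos)$ steps at indices in $pos$ are certainly included, and since $\deg p>\mathrm{Card}(pos)$ at least one more degree-preserving step occurs, necessarily at an index $i\notin pos$. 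Therefore the algorithm returns.

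For correctness, let $d\notin pos$ be the returned index, so $1-\sigma^d(\beta)x\mid_\ell F_{d-1}$; I must show $1-\sigma^d(\beta)x\mid_\ell\lambda$. Writing $\lambda=\lcrm{1-\sigma^{k_1}(\beta)x,\dots,1-\sigma^{k_\nu}(\beta)x}$, set $G_{d-1}=\lcrm{\lambda,\,1-\sigma^0(\beta)x,\dots,1-\sigma^{d-1}(\beta)x}=\lcrm{1-\sigma^l(\beta)x:\ l\in\{k_1,\dots,k_\nu\}\cup\{0,\dots,d-1\}}$. Because $p\mid_\ell\lambda$, the polynomial $F_{d-1}$ left divides $G_{d-1}$, hence $1-\sigma^d(\beta)x\mid_\ell G_{d-1}$. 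By Lemmas \ref{bothlcrm} and \ref{degreelclm}, the indices $l$ for which $1-\sigma^l(\beta)x\mid_\ell G_{d-1}$ are exactly those in $\{k_1,\dots,k_\nu\}\cup\{0,\dots,d-1\}$; as $d\notin\{0,\dots,d-1\}$, this forces $d\in\{k_1,\dots,k_\nu\}$, i.e.\ $d$ is an error position and $1-\sigma^d(\beta)x\mid_\ell\lambda$. Combined with $d\notin pos$, this is the desired new error position.

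The step that needs genuine care is the comparison of $F_{d-1}$ with $G_{d-1}$: unlike the block case, $F_{d-1}$ itself will in general \emph{not} left divide $\lambda$, because once a factor $1-\sigma^l(\beta)x$ with $l$ not an error position has been incorporated, $F_{d-1}$ acquires $\sigma^{l-1}(\beta^{-1})$ as a spurious left root. The resolution is that replacing $p$ by its right multiple $\lambda$ inside the least common right multiple yields a polynomial $G_{d-1}$ that dominates $F_{d-1}$ and whose set of linear left divisors is under exact control through Lemma \ref{degreelclm}. Beyond this, the only remaining bookkeeping is keeping the left and right sides of $\mid$ and of $\lcrm{-}$ aligned throughout, which is routine once Lemma \ref{bothlcrm} is used to pass between the $1-\sigma^l(\beta)x$ and the $x-\sigma^{l-1}(\beta^{-1})$.
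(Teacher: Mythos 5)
Your proof is correct and follows essentially the same route as the paper's: both build the increasing chain of least common right multiples starting from $p$, use the fact that the full $\lcrm{-}$ of all the linear factors has degree $n$ together with $\deg p>\mathrm{Cardinal}(pos)$ to force a degree-preserving step at some $i\notin pos$, and then identify the returned index as an error position by comparing with the corresponding chain built from $\lambda$ and invoking Lemmas \ref{bothlcrm} and \ref{degreelclm}. The only differences are cosmetic (indexing over all of $\{0,\dots,n-1\}$ rather than over $\{0,\dots,n-1\}\setminus pos$, and phrasing termination as a count of degree-preserving steps rather than as a contradiction with $\deg f_r>n$).
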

\begin{proof}
Let $T=\{t_1<t_2<\cdots <t_r\}=\{0,1,\ldots ,n-1\}\setminus pos$. For any $0\leq i\leq r$, we denote $\lambda_i=\lcrm{\lambda_{i-1},1-\sigma^{t_i}(\beta)x}$ with $\lambda_{0}=\lambda$, and $f_i=\lcrm{f_{i-1},1-\sigma^{t_i}(\beta)x}$ with $f_{0}=p$. It is clear that $f_i\mid_{\ell} \lambda_i$ for any $i=0,\ldots, r$. 
We prove first that the algorithm must return a position. Suppose that the sequence $\{\deg f_i\}_{0\leq i\leq r}$ always grows. Hence $\deg f_{r}=r+\deg p>n-\mathrm{Cardinal}(pos)+\deg p>n$. This is not possible, since $f_{r}\mid_{\ell} \lambda_{r}=x^n-1$. So there exists a minimal $d\geq 0$ such that $\deg f_{d-1}=\deg f_d$. Now, $1-\sigma^{t_d}(\beta)x \mid_{\ell} f_{d-1} \mid_{\ell} \lambda_{d-1}=\lcrm{\lambda,1-\sigma^{t_1}(\beta)x,\ldots ,1-\sigma^{t_{d-1}}(\beta)x}$. Since, $t_d\not =t_1,\ldots ,t_{d-1}$, $1-\sigma^{t_d}(\beta)x \mid_{\ell} \lambda$. Thus, by Lemma \ref{degreelclm} and Corollary \ref{ev}, $t_d$ is an error position.
\end{proof}

Therefore, by means of a recursive application of Algorithm \ref{find_root}, we may find all error positions, and both the error locator and the error evaluator polynomials, see Algorithm \ref{key_solver}.

\begin{algorithm}[h]
\caption{\texttt{Key equation failure solver}}\label{key_solver}
\begin{algorithmic}[1]
\REQUIRE  Polynomials $v_I, r_I$ with $\lambda=v_Ig$, $\omega=r_Ig$ for some $g\in R$, the set $pos=\{i\geq 0 \text{ with } (1-\sigma^{i}(\beta)x)\mid_{\ell} v_I\}$
\ENSURE The error locator polynomial $\lambda$ and the error evaluator polynomial $\omega$.
\STATE $f\gets v_I$, $h\gets r_I$
\WHILE{$\mathrm{Cardinal}(pos)<\deg f$}
	\STATE $d\gets \operatorname{Find-a-position}(f,pos)$
	\STATE $f \gets \lcrm{f,1-\sigma^d(\beta)x}$
	\FOR{$0\leq i \leq n-1$}
		\IF{$i\notin pos$ and $1-\sigma^i(\beta)x\mid_{\ell} f$}
			\STATE $pos\gets pos\cup \{i\}$
		\ENDIF
	\ENDFOR	
\ENDWHILE
\STATE $g\gets \operatorname{rquot}(f,v_I)$
\RETURN $f$, $hg$
\end{algorithmic}
\end{algorithm}

\begin{example}
Consider the code and received word of Example \ref{decfa}. As we have seen there, there is a key equation failure since $(\lambda,\omega)_r\not =1$. Actually, by applying REEA, the polynomials $v_I=x + t/(t+1)$ and $r_I=(t^2+t+a)/(t+1)$, and the set of known error positions $pos$ is the empty set. We follow Algorithm \ref{key_solver} and compute $$\lcrm{v_I,1-\sigma^0(\beta)x}=x^{2} + \left(\frac{t^{3}}{a^3
t^{3} + a^3 t^{2} + a^2 t + a^2}\right) x + \frac{a t^{3} + a^4
t}{t^{4} + a t^{3} + t^{2} + a^{2} t + a^4}.$$
The degree has grown so we continue and compute $\lcrm{v_I,1-\sigma^0(\beta)x, 1-\sigma^1(\beta)x}$. Fortunately, in this case, the reader may check that the degree remains  being two. 
By Algorithm \ref{find_root}, 1 is an unknown error position. Therefore, $$v^{(1)}_I\gets\lcrm{v_I, 1-\sigma^1(\beta) x}=x^{2} + \left(\frac{t^{3}}{a^3
t^{3} + a^3 t^{2} + a^2 t + a^2}\right) x + \frac{a t^{3} + a^4
t}{t^{4} + a t^{3} + t^{2} + a^{2} t + a^4}.$$ is a left divisor of the error locator polynomial. Next, we must update the set of known error positions. One can see that  $v^{(1)}_I=\lclm{1-\sigma^0(\beta)x, 1-\sigma^1(\beta)x}$, so $pos=\{0,1\}$ and $\mathrm{Cardinal}(pos)=\deg v^{(1)}_I$. Thus, by Theorem \ref{criterion}, $\lambda=v^{(1)}_I$. Now, the right quotient of $\lambda$ over $v_I$, $$g=x + \frac{a^4 t^{2} + 1}{a^3 t^{3} +
a^6 t^{2} + a^4 t + 1}$$ and  $$\omega=r_Ig=\left(\frac{ t^{2} +  t + a }{ t + 1}\right) x +
\frac{\left(a^{2} + 1\right) t^{4} + \left(a^{2} + 1\right) t^{3} +
\left(a^{2} + 1\right) t^{2} + a^{2} t + a + 1}{\left(a^{2} + a +
1\right) t^{4} + \left(a^{2} + 1\right) t^{3} + \left(a^{2} + a +
1\right) t^{2} + t + a^{2}},$$
so the error locator and the error evaluator polynomials are determined.
\end{example}

 We close this section analyzing how often a key equation failure occurs. Indeed, for a given set of error positions, we will show that the values of the errors must satisfy a non-trivial relation. Recall that such a failure is only possible if $(\lambda, \omega)_r \neq  1$.
 
\begin{proposition}\label{gcrdtodet}
\(\gcrd{\omega,\lambda} = 1\) if and only if
\begin{equation}\label{detno0}
\begin{vmatrix}
e_1 & e_2  & \dots & e_\nu \\
\sigma(e_1) & \sigma(e_2) & \dots & \sigma(e_\nu) \\
\vdots & \vdots & \ddots & \vdots \\
\sigma^{\nu-1}(e_1) & \sigma^{\nu-1}(e_2) & \dots & \sigma^{\nu-1}(e_\nu)\end{vmatrix}\neq 0.
\end{equation}
\end{proposition}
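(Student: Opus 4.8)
I would prove this by making explicit the relation between the error evaluator $\omega$ and the error values $e_1,\dots,e_\nu$ via the basis $\{p_1,\dots,p_\nu\}$ of $R/R\lambda$, and then translating ``$\gcrd{\omega,\lambda}\neq 1$'' into the existence of a nonzero left root of $\lambda$ that also kills $\omega$. Recall that $\lambda=\lcrm{1-\sigma^{k_1}(\beta)x,\dots,1-\sigma^{k_\nu}(\beta)x}$ has degree $\nu$, that $\lambda=(1-\sigma^{k_j}(\beta)x)p_j$ with $\deg p_j=\nu-1$, and $\omega=\sum_{j=1}^\nu e_j\sigma^{k_j}(\alpha)p_j$. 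A common right divisor of $\omega$ and $\lambda$ of positive degree is equivalent to $\gcrd{\omega,\lambda}$ being divisible on the left by some $1-\sigma^d(\beta)x$ — because any right divisor of $\lambda$ is, by Lemma \ref{degreelclm} and Lemma \ref{bothlcrm}, an $\lcrm{}$ of a subset of the $1-\sigma^{k_i}(\beta)x$, hence itself left-divisible by each such factor. So the statement reduces to: \emph{some} $1-\sigma^{k_d}(\beta)x$ left divides $\omega$, i.e. $\sigma^{k_d-1}(\beta^{-1})$ is a left root of $\omega$.

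First I would compute the left-remainder of $\omega$ upon right division by $1-\sigma^{k_d}(\beta)x$ (equivalently, by $x-\sigma^{k_d-1}(\beta^{-1})$) in terms of the $e_j$. Using $\omega=\sum_j e_j\sigma^{k_j}(\alpha)p_j$ and the fact that $p_j=(1-\sigma^{k_j}(\beta)x)^{-1}\lambda$ inside the skew power series ring (as in the proof of the key equation), one has $\sigma^{k_j}(\alpha)p_j=\sigma^{k_j}(\alpha)\sum_{i\geq0}N_i(\sigma^{k_j}(\beta))x^i\lambda=\sum_{i\geq0}\sigma^{k_j+i}(\alpha)x^i\lambda$. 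Truncating at degree $\nu-1$ (since $\deg p_j=\nu-1$ and $\deg\omega<\nu$), I get $\omega=\sum_{i=0}^{\nu-1}\big(\sum_{j=1}^\nu e_j\sigma^{k_j+i}(\alpha)\big)x^i\cdot(\text{leading part of }\lambda^{-1}\text{'s expansion})$ — more cleanly, $\omega$ has coefficient vector, in the $\{p_1,\dots,p_\nu\}$ basis, equal to $(e_1\sigma^{k_1}(\alpha),\dots,e_\nu\sigma^{k_\nu}(\alpha))$. The condition that $1-\sigma^{k_d}(\beta)x$ left divides $\omega=\sum_j c_j p_j$ (with $c_j=e_j\sigma^{k_j}(\alpha)$) should, after unwinding, be exactly that $c_d=0$ \emph{or} a vanishing of the determinant of the $(\nu-1)\times(\nu-1)$ norm/circulant-type matrix obtained by deleting the $d$-th data — but since all $c_d\neq0$, it will come down to a single scalar condition. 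The cleanest route: write $\omega$ in the standard basis $\{1,x,\dots,x^{\nu-1}\}$, whose coefficients are $\sum_{j}e_j\sigma^{k_j+i}(\alpha)$ for $i=0,\dots,\nu-1$ up to the change-of-basis matrix $\big(N_i(\sigma^{k_\ell}(\beta))\big)$, which is invertible by Lemma \ref{circulantlemma}. Then $\sigma^{k_d-1}(\beta^{-1})$ being a left root of $\omega$ becomes a linear relation among the rows, and chaining this over all $d$ produces the determinant \eqref{detno0}.

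Concretely, the final step is a linear-algebra dichotomy. Set $\omega_i=\sum_{j=1}^\nu e_j\sigma^{k_j+i}(\alpha)$ (these are, up to an invertible triangular-ish transformation, the standard coefficients of $\omega$). Then $\gcrd{\omega,\lambda}=1$ fails iff some $1-\sigma^{k_d}(\beta)x$ divides all of $\omega$, which — after expressing the left-division remainder — forces, for that $d$, a $\nu$-term relation; but by running the argument symmetrically across $d=1,\dots,\nu$ one shows $\gcrd{\omega,\lambda}\neq1$ iff the vectors $(\sigma^i(e_1),\dots,\sigma^i(e_\nu))$, $i=0,\dots,\nu-1$, are $\bfield(t)$-linearly dependent after ``twisting out'' the $\sigma^{k_j}(\alpha)$ factors — and since $\{\sigma^{k_j}(\alpha)\}$ are part of a normal basis, this twisting does not affect the rank over $\bfield(t)^\sigma$, giving precisely the vanishing of \eqref{detno0}. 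I would structure the proof as: (i) reduce $\gcrd\neq1$ to ``$\lambda$ and $\omega$ share a factor $1-\sigma^{k_d}(\beta)x$'' via Lemma \ref{degreelclm}; (ii) express this as $\omega$'s coordinate vector in $\{p_j\}$ lying in the span of $\{p_j\}_{j\neq d}$ intersected appropriately; (iii) turn it into the determinant via Lemma \ref{circulantlemma} and the normal-basis property.

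\textbf{Main obstacle.} The delicate point is step (ii)–(iii): correctly identifying \emph{which} minor or determinant the divisibility condition corresponds to, and verifying that the $\sigma^{k_j}(\alpha)$ weights and the norm-matrix change of basis conspire to leave exactly the clean $\sigma$-twisted determinant \eqref{detno0} with no extraneous factors. One must be careful that $\gcrd{\omega,\lambda}$ has positive degree iff \emph{at least one} $1-\sigma^{k_d}(\beta)x$ is a common left factor (not merely that the two polynomials fail to be right-coprime in some weaker sense) — this uses crucially that every right divisor of $\lambda$ is a full $\lcrm$ of a sub-collection, so the gcrd is again of that form. Getting the equivalence to be ``iff'' in both directions, rather than just one implication, is where the normal-basis hypothesis and Lemma \ref{circulantlemma} must be invoked with the right index bookkeeping.
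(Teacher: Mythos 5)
Your reduction in step (i) is the fatal gap, and it makes the dichotomy you pursue in (ii)--(iii) the wrong one. You claim that every right divisor of $\lambda$ is an $\lcrm{\,}$ of a subset of the factors $1-\sigma^{k_i}(\beta)x$, citing Lemmas \ref{degreelclm} and \ref{bothlcrm}; but those lemmas only control which polynomials of the special form $x-\sigma^{s}(\beta^{-1})$ \emph{left} divide such an $\lcrm{\,}$, and you are conflating left with right divisibility throughout: the factors $1-\sigma^{k_j}(\beta)x$ are left divisors of $\lambda$ (since $\lambda=(1-\sigma^{k_j}(\beta)x)p_j$), whereas $\gcrd{\omega,\lambda}$ is a common \emph{right} divisor. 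Right divisors of $\lambda$ correspond to left submodules of $R/R\lambda=\bigoplus_j Rp_j/R\lambda$, and since every summand $Rp_j/R\lambda\cong R/R(1-\sigma^{k_j}(\beta)x)$ is isomorphic to the same simple module ($\beta=\sigma(\alpha)\alpha^{-1}$ is conjugate to $1$), this module is isotypic and has infinitely many submodules: $\lambda$ has infinitely many right divisors up to associates, almost none of them $\lcrm{\,}$'s of subsets of the given linear factors. Example \ref{decfa} already exhibits this: the computed $\gcrd{\lambda,\omega}$ there is a degree-one right divisor that is not one of the $x-\sigma^{k_j-1}(\beta^{-1})$. Concretely, for $\nu=2$ the condition ``some $1-\sigma^{k_d}(\beta)x$ left divides $\omega$'' picks out two points in the projective line of possible $\omega$'s, while $\gcrd{\omega,\lambda}\neq 1$ is a codimension-one determinantal condition, so the equivalence you aim to prove is false, not merely unjustified. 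Relatedly, ``$1-\sigma^{k_d}(\beta)x$ left divides $\gcrd{\omega,\lambda}$ and $\gcrd{\omega,\lambda}$ right divides $\omega$'' does not imply ``$1-\sigma^{k_d}(\beta)x$ left divides $\omega$''. Finally, your steps (ii)--(iii) never actually produce the $\sigma$-twisted rows $\sigma^i(e_j)$ of \eqref{detno0}: a single divisibility condition yields one linear relation, not a $\nu\times\nu$ determinant, and ``chaining over all $d$'' is asserted rather than carried out.

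The mechanism you are missing is the cyclic-module criterion. The paper's proof runs: by Bezout, $\gcrd{\omega,\lambda}=1$ iff $\omega+R\lambda$ generates $R/R\lambda$ as a left $R$-module, iff $\omega,x\omega,\dots,x^{\nu-1}\omega$ are $\mathbb{F}(t)$-linearly independent modulo $R\lambda$ (Proposition \ref{cyclicgenerator}); then Lemma \ref{coordinates} computes the coordinates of $x^i\omega$ in the basis $\{p_1,\dots,p_\nu\}$ to be $\sigma^i(e_j)\sigma^{k_j}(\alpha)$ --- this is exactly where the powers $\sigma^i$ in the rows of \eqref{detno0} come from --- and the nonzero column scalars $\sigma^{k_j}(\alpha)$ are divided out. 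Your observation that $\omega$ has coordinate vector $(e_1\sigma^{k_1}(\alpha),\dots,e_\nu\sigma^{k_\nu}(\alpha))$ is correct, but it is only the $i=0$ row; without multiplying by powers of $x$ and invoking the generation criterion, the determinant does not appear.
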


\begin{proof}
By Proposition \ref{cyclicgenerator} and Lemma \ref{coordinates} in the Appendix, \(\gcrd{\omega,\lambda} = 1\) if and only if the matrix 
\[
A = \begin{pmatrix}
e_1 \sigma^{k_1}(\alpha)& e_2 \sigma^{k_2}(\alpha)& \dots & e_\nu \sigma^{k_\nu}(\alpha)\\
\sigma(e_1) \sigma^{k_1}(\alpha)& \sigma(e_2) \sigma^{k_2}(\alpha)& \dots & \sigma(e_\nu) \sigma^{k_\nu}(\alpha)\\
\vdots & \vdots & \ddots & \vdots \\
\sigma^{\nu-1}(e_1) \sigma^{k_1}(\alpha)& \sigma^{\nu-1}(e_2) \sigma^{k_2}(\alpha)& \dots & \sigma^{\nu-1}(e_\nu)\sigma^{k_\nu}(\alpha)
\end{pmatrix} \]
has full rank. Since  $\sigma^{k_j}(\alpha) \neq 0$ for every $j = 1, \dots, \nu$, we get that the determinant of $A$ is non-zero if and only if  the determinant in \eqref{detno0} is not zero.
\end{proof}

\begin{theorem}\label{thprob}
If \(\gcrd{\omega,\lambda} \neq 1\) then the error values \(e_1, \dots, e_\nu\) are linearly dependent over \(\mathbb{F}(t)^{\sigma}\).
\end{theorem}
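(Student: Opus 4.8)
The plan is to reduce the statement to the determinantal criterion already proved in Proposition \ref{gcrdtodet} and then to feed the result into the ``$\sigma$-Wronskian'' statement of Lemma \ref{circulantlemma}. By Proposition \ref{gcrdtodet}, the hypothesis $\gcrd{\omega,\lambda}\neq 1$ is exactly equivalent to the vanishing of the determinant
\[
\begin{vmatrix}
e_1 & \dots & e_\nu \\
\vdots & \ddots & \vdots \\
\sigma^{\nu-1}(e_1) & \dots & \sigma^{\nu-1}(e_\nu)
\end{vmatrix}=0 ,
\]
so it is enough to show that this vanishing forces $e_1,\dots,e_\nu$ to be linearly dependent over $K:=\mathbb{F}(t)^\sigma$. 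I would argue by contraposition: assuming that $e_1,\dots,e_\nu$ are linearly independent over $K$, I will deduce that the above determinant is nonzero.

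First I would record the relevant size bound. Since $\nu\leq\tau=\lfloor(\delta-1)/2\rfloor$ and $\delta\leq n$, we have $\nu\leq n$ (in fact $\nu<n$); moreover the extension $\mathbb{F}(t)^\sigma\subseteq\mathbb{F}(t)$ is Galois with group $\langle\sigma\rangle$, hence of degree $n$. Consequently a $K$-linearly independent family $\{e_1,\dots,e_\nu\}$ can be completed to a $K$-basis $\{e_1,\dots,e_\nu,e_{\nu+1},\dots,e_n\}$ of $\mathbb{F}(t)$.

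Next I would apply Lemma \ref{circulantlemma} with $L=\mathbb{F}(t)$, taking this completed basis relabelled as $\{\alpha_0,\dots,\alpha_{n-1}\}$ (so $\alpha_{j-1}=e_j$ for $1\leq j\leq\nu$) and the index subset $\{k_0,\dots,k_{\nu-1}\}=\{0,1,\dots,\nu-1\}$, which is admissible precisely because $\nu\leq n$. Lemma \ref{circulantlemma} then yields
\[
\det\big(\sigma^i(\alpha_j)\big)_{0\leq i,j\leq\nu-1}\neq 0 ,
\]
which is exactly the determinant appearing in \eqref{detno0}. By Proposition \ref{gcrdtodet} this gives $\gcrd{\omega,\lambda}=1$, contradicting the hypothesis, and the contrapositive is proved.

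The heavy lifting has already been carried out in Lemma \ref{circulantlemma} and Proposition \ref{gcrdtodet}, so I do not expect a genuine obstacle. The only point that needs care is the bookkeeping that makes Lemma \ref{circulantlemma} applicable, namely that the family $\{e_1,\dots,e_\nu\}$ is short enough ($\nu\leq n$) to be completed to a basis of $\mathbb{F}(t)$; this follows from $\nu\leq\tau$ together with $\delta\leq n$ and $[\mathbb{F}(t):\mathbb{F}(t)^\sigma]=n$. (The degenerate cases $\nu\leq 1$ are harmless: there the determinant in \eqref{detno0} is $1$ or $e_1\neq 0$, so $\gcrd{\omega,\lambda}=1$ always and the statement is vacuous.)
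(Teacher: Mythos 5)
Your proposal is correct and follows essentially the same route as the paper: assume the error values are linearly independent over $\mathbb{F}(t)^{\sigma}$, complete them to a basis of $\mathbb{F}(t)$, invoke Lemma \ref{circulantlemma} to get the nonvanishing of the determinant in \eqref{detno0}, and conclude via Proposition \ref{gcrdtodet}. The only difference is that you spell out the bookkeeping ($\nu\leq n$ and the degree of the extension) that the paper leaves implicit.
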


\begin{proof}
 Assume that  \(\{e_1, e_2, \dots, e_\nu \}\) are linearly independent over $\mathbb{F}(t)^{\sigma}$, and let \(u_{\nu+1}, \dots, u_{n} \in \mathbb{F}(t)\) be such that \[\{e_1, e_2, \dots, e_\nu, u_{\nu+1}, \dots, u_{n}\}\] is an \(\mathbb{F}(t)^\sigma\)--basis of \(\mathbb{F}(t)\). But then Lemma \ref{circulantlemma} implies the inequality \eqref{detno0}. By Proposition \ref{gcrdtodet}, $(\lambda,\omega)_r = 1$. 
\end{proof}

 \begin{remark}
Taking coordinates with respect to a fixed basis of $\mathbb{F}(t)$ as an ($n$--dimensional) vector spcace over $K =\mathbb{F}(t)^{\sigma}$, we deduce from Theorem \ref{thprob} that the set of errors $\{ e_1, \dots, e_\nu \}$ giving a key equation failure is contained in the determinantal algebraic sub-variety of $K^{\nu n}$ determined by the common zeroes of all $\nu \times \nu$ minors. The  dimension of this varety is known to be at most $n-\nu + 1$ (see, e.g., \cite[Exercise 10.10]{Eisenbud:1995}), which is strictly smaller than $\nu n$ if $\nu > 1$.   Consequently, the theoretical probability that a key equation failure occurs is zero. 
\end{remark}

 \begin{remark}
 Skew block codes were defined in \cite{Boucher} and \cite{Ulmer} as left ideals of a factor ring of a skew polynomial ring  of the form $\mathbb{F}[x;\sigma]$. The results and algorithms of our paper will work almost verbatim in this setting, that is,  if we consider an automorphism  $\sigma$ of the finite field $\mathbb{F}$ instead of an $\mathbb{F}$-automorphism of $\mathbb{F}(t)$. This would then represent an alternative construction and decoding scheme for some of the skew cyclic block codes described in \cite{Boucher} and \cite{Ulmer}. Obviously, in this case, the theoretical probability of a key equation failure is always positive. 
\end{remark}

\section{Conclusions}

In this paper we have designed a decoding algorithm for convolutional codes aiming to provided the basis of a future alternative to the celebrated Viterbi algorithm. The algorithm uses the algebra structure of a class of codes, named skew BCH convolutional codes by the authors, in order to follow a Sugiyama-like procedure for determining the position of the errors. For a number of errors less than the error-correcting capacity of the code, the probability of a key equation failure tends to zero as the maximum degree of the coefficients goes towards to infinity. An auxiliary algorithm is designed for resolving any key equation failure and, henceforth, compute a error locator polynomial even in this case.

\section*{Appendix}
 In this appendix, we prove some technical facts needed in the paper. We also collect some basic facts 
on the  the non-commutative polynomial ring $\mathbb{F}(t)[x;\sigma]$ for the convenience of the reader non-familiar with the theory of Ore extensions (or skew polynomial rings).  The general theory was systematized in \cite{Ore:1933}. A good introduction of its basics essentials is the first chapter of \cite{Jacobson:1996}. 
For our purposes, we only need to consider the particular case of a skew polynomial ring constructed from a field automorphism. So, let $D$ be a field, and $\sigma$ an automorphism of $D$. The elements of the skew polynomial ring $R = D[x;\sigma]$ are standard polynomials in the indeterminate $x$ with coefficients in $D$ written on the left. The sum of polynomials in $R$ is defined as in the commuative case. The product is based on the rules $x^nx^m = x^{n + m}$ for $n, m \in \mathbb{N}$, while  $xa = \sigma (a) x$, for every $a \in D$. 

The degree $\deg(f)$ of a left polynomial $f \in R$, as well as its leading coefficient $\mathrm{lc}(f) \in D$,  are defined in the usual way. Hence $\deg (fg) = \deg(f) + \deg(g)$.
 The ring $R$ is a non-commutative domain, and there exist  both left and right Euclidean division algorithms, that work much as in the commutative case, with some adjustements coming from the non-commutativity. For intance, the right Euclidean algorithm is described in Algorithm \ref{rightdivalg}.  The polynomials $r$ and $q$ obtained as the output of Algorithm \ref{rightdivalg} are called \emph{right remainder} and \emph{right quotient}, respectively, of the right division of $f$ by $g$. We will use the notation $r=\mathrm{rrem}(f,g)$ and $q = \mathrm{rquot}{(f,g)}$. Analogous conventions and notations are used for the left division algorithm. 

\begin{algorithm}
\caption{\texttt{Right Euclidean Division}}\label{rightdivalg}
\begin{algorithmic}
\REQUIRE $f, g \in D[x;\sigma]$ with $g \neq 0$.
\ENSURE $q, r \in D[x;\sigma]$ such that $f = gq + r$ and $\deg r < \deg g$. 
\STATE $q\gets 0$, $r\gets f$
\WHILE{$\deg g \leq \deg r$}
	\STATE $a \gets \sigma^{-\deg g}(\mathrm{lc}(g)^{-1}\mathrm{lc}(r))$
	\STATE $q\gets q + ax^{\deg r - \deg g}$
	\STATE $r\gets r - g a x^{\deg r - \deg g}$ 
\ENDWHILE
\RETURN $q$, $r$
\end{algorithmic}
\end{algorithm}

\begin{remark}
There is no universal agreement in the literature in the  use of the adjetives ``left'' and ``right'' concerning the Euclidean division. For instance, what is called left Euclidean division algorithm in \cite{Jacobson:1996} is considered as the right one in \cite{Lam/Leroy:1988}. We follow Jacobson's convention.  
\end{remark}

 These  division algorithms allow to prove, in the usual way, that every left and every right ideal of $R$ is principal. The principal left ideal generated by a given $f \in R$ is denoted by $Rf$, while $fR$ denotes the principal right ideal generated by $f$. 

Given  nonzero $f, g \in R$,
 $Rf \subseteq Rg$ if and only if $g$ is \emph{right divisor} of $f$ or $f$ is a \emph{left multiple} of $g$. Moreover, $Rf = Rg$ if and only if there is a nonzero $u \in D$ such that $f = ug$. We say then that $f$ and $g$ are left associated. A standard argument shows that $Rf + Rg = Rd$ if and only if $d$ is the greatest common right divisor of $f$ and $g$. We will use the notation $d = \gcrd{f,g}$. It is uniquely determined up to left associates. Similarly,
$Rf \cap Rg = Rm$ if and only if $m$ is the lowest common left multiple of $f$ and $g$, which is denoted by $m = \lclm{f,g}$, which is unique up to left associates.  $\lclm{-}$. The equality $\deg \lclm{f,g}=\deg f+\deg g - \deg \gcrd{f,g}$ holds in this non-commutative setting. Both $\gcrd{f,g}$ and $\lclm{f,g}$ can be computed by using the Left Extended Euclidean Algorithm. The right side version of these definitions and properties can be stated analogously. For our purposes, we describe explicitly a version of the Right Extended Euclidean Algorithm which provides the Bezout's coefficients in each step of the algorithm, see Algorithm \ref{REEA_alg}.

\begin{algorithm}
\caption{\texttt{Right Euclidean Extended Algorithm}}\label{REEA_alg}
\begin{algorithmic}
\REQUIRE $f, g \in D[X;\sigma]$ with $f \neq 0, g \neq 0$.
\ENSURE $\{u_i,v_i,r_i\}_{i=0,\ldots ,h, h+1}$ such that $r_i=fu_i+gv_i$ for any $i$, $r_h=(f,g)_{\ell}$,  and $u_{h+1}f = [f,g]_r$.
\STATE $r_0\gets f$, $r_1\gets g$.
\STATE $u_0\gets 1$, $u_1\gets 0$.
\STATE $v_0\gets 0$, $v_1\gets 1$.
\STATE $q\gets 0$, $rem \gets 0$.
\STATE $i\gets 1$.
\WHILE{$r_i\not = 0$}
	\STATE $q, rem\gets \operatorname{rquot-rem}(r_{i-1},r_i)$
	\STATE $r_{i+1}\gets rem$
	\STATE $u_{i+1}\gets u_{i-1}-u_iq$
	\STATE $v_{i+1}\gets v_{i-1}-v_iq$
	\STATE $i\gets i+1$
\ENDWHILE
\RETURN $\{u_i,v_i,r_i\}_{i=0,\ldots ,h, h+1}$
\end{algorithmic}
\end{algorithm}

The following result is a right-side version for Ore polynomials of \cite[Lemma 3.8]{VonzurGathen}.

\begin{lemma}\label{REEA}
Let $f,g\in D[x;\sigma]$ and $\{u_i,v_i,r_i\}_{i=0,\ldots ,h}$ be the coefficients obtained when applying the REEA to $f$ and $g$. 
Let us denote $R_0=\left (\begin{smallmatrix} u_0 & v_0 \\ u_1 & v_1 \end{smallmatrix}\right )$,  $Q_i=\left (\begin{smallmatrix} 0 & 1 \\ 1 & -q_i \end{smallmatrix}\right )$ and $R_i=R_0Q_1\cdots Q_i$ for any $i=0,\ldots , h$. Hence, for any $i=0,\ldots , h$, the following items hold:
\begin{enumerate}[i)]
\item $(f \, g) R_i=(r_{i-1} \, r_i)$.
\item $R_i=\left (\begin{smallmatrix} u_i & u_{i+1} \\ v_{i} & v_{i+1} \end{smallmatrix}\right )$.
\item $f u_i+g v_i=r_i$.
\item $R_i$ has a left and right inverse.
\item $(u_i,v_i)_r=1$.
\item $\deg f=\deg r_{i-1}+\deg v_i$
\end{enumerate}
\end{lemma}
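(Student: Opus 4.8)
The statement to prove is Lemma~\ref{REEA}, an omnibus list of six facts about the matrices and Bezout coefficients generated by the Right Extended Euclidean Algorithm for Ore polynomials. The plan is to prove items i)--iii) together by a single induction on $i$, then derive iv)--v) as immediate consequences of the product structure, and finally obtain vi) by a separate degree bookkeeping argument.

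First I would set up the induction for i)--iii). The base case $i=0$ is just the initialization $R_0=\left(\begin{smallmatrix} u_0 & v_0 \\ u_1 & v_1\end{smallmatrix}\right)=\left(\begin{smallmatrix} 1 & 0 \\ 0 & 1\end{smallmatrix}\right)$, which gives $(f\,g)R_0=(f\,g)=(r_{-1}\,r_0)$ if we adopt the convention $r_{-1}=f$, and $fu_0+gv_0=f=r_0$, $fu_1+gv_1=g=r_1$. For the inductive step, I would use the defining recursion $R_{i+1}=R_iQ_{i+1}$ where $Q_{i+1}=\left(\begin{smallmatrix}0 & 1 \\ 1 & -q_{i+1}\end{smallmatrix}\right)$. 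Multiplying $(f\,g)R_i=(r_{i-1}\,r_i)$ on the right by $Q_{i+1}$ yields $(f\,g)R_{i+1}=(r_{i-1}\,r_i)\left(\begin{smallmatrix}0 & 1 \\ 1 & -q_{i+1}\end{smallmatrix}\right)=(r_i\;\; r_{i-1}-r_iq_{i+1})=(r_i\,r_{i+1})$, using the division step $r_{i-1}=r_iq_{i+1}+r_{i+1}$. That proves i) at step $i+1$. For ii), I would check directly that the recursions $u_{i+1}\gets u_{i-1}-u_iq$ and $v_{i+1}\gets v_{i-1}-v_iq$ are precisely the entries produced by right-multiplying $\left(\begin{smallmatrix} u_i & u_{i+1} \\ v_i & v_{i+1}\end{smallmatrix}\right)$ by $Q_{i+1}$, so the shape of $R_i$ propagates. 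Item iii) is then just reading off the columns of the identity in i): the first column gives $fu_i+gv_i=r_i$.

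Next, item iv): since each $Q_j=\left(\begin{smallmatrix}0 & 1 \\ 1 & -q_j\end{smallmatrix}\right)$ has the two-sided inverse $\left(\begin{smallmatrix}q_j & 1 \\ 1 & 0\end{smallmatrix}\right)$ (one checks both products equal the identity, being careful that entries lie in the non-commutative ring $R$ but these particular products only involve $0$, $1$, and $q_j$ so no commutativity issue arises), and $R_0$ is the identity, the product $R_i=R_0Q_1\cdots Q_i$ has the two-sided inverse $Q_i^{-1}\cdots Q_1^{-1}$. Item v) follows: writing $R_i^{-1}=\left(\begin{smallmatrix} a & b \\ c & d\end{smallmatrix}\right)$, the bottom row of $R_iR_i^{-1}=\mathrm{Id}$ combined with ii) gives $v_ib+v_{i+1}d=1$ and $u_ib+u_{i+1}d=0$; but actually the cleaner route is that $R_i^{-1}R_i=\mathrm{Id}$ gives, in the first column, a relation of the form $(\text{something})u_i+(\text{something})v_i$ expressing a unit, hence any common right divisor of $u_i$ and $v_i$ is a unit, i.e. $(u_i,v_i)_r=1$. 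I would need to be slightly careful here about left versus right linear combinations; the correct reading is that $R_i$ having a \emph{left} inverse means some matrix $L$ with $LR_i=\mathrm{Id}$, and the first column of $R_i$ is $\transpose{(u_i\; v_i)}$, so $L\transpose{(u_i\; v_i)}=\transpose{(1\;0)}$ exhibits a left Bezout identity $\ell_1 u_i+\ell_2 v_i=1$, whence $(u_i,v_i)_r=1$.

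Finally, item vi), $\deg f=\deg r_{i-1}+\deg v_i$. The degrees of the remainders strictly decrease, so $\deg r_{i-1}>\deg r_i$, and the quotients satisfy $\deg q_j=\deg r_{j-2}-\deg r_{j-1}$. I would induct on $i$: for $i=1$, $v_1=1$ and $r_0=f$, giving $\deg f=\deg f+0$. Assuming $\deg f=\deg r_{i-1}+\deg v_i$, from $v_{i+1}=v_{i-1}-v_iq_{i+1}$ and $\deg(v_iq_{i+1})=\deg v_i+\deg q_{i+1}=\deg v_i+\deg r_{i-1}-\deg r_i=\deg f-\deg r_i>\deg f - \deg r_{i-1}\geq \deg v_{i-1}$ (using the inductive hypothesis one step back, $\deg v_{i-1}=\deg f-\deg r_{i-2}\leq \deg f-\deg r_{i-1}$, wait this needs $\deg r_{i-2}\geq \deg r_{i-1}$ which holds), so $\deg v_{i+1}=\deg v_i+\deg q_{i+1}$, and then $\deg r_i+\deg v_{i+1}=\deg r_i+\deg v_i+\deg r_{i-1}-\deg r_i=\deg v_i+\deg r_{i-1}=\deg f$. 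The main obstacle I anticipate is precisely this degree argument: keeping the strict-versus-nonstrict inequalities straight so that the leading terms of $v_{i-1}$ and $v_iq_{i+1}$ genuinely cannot cancel, and handling the degenerate early indices ($r_{-1}$, $v_0=0$, $v_1=1$) with the right conventions. Everything else is formal manipulation of the matrix recursion, transported verbatim from the commutative case since the only products appearing in the $Q_j$'s involve central-like constants $0,1$ and the quotients on a fixed side.
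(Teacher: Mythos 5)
Your proof follows essentially the same route as the paper's: the paper delegates i)--iii) to the analogous commutative argument in von zur Gathen--Gerhard (which is exactly the induction you spell out), inverts each $Q_j$ by $\left(\begin{smallmatrix} q_j & 1\\ 1 & 0\end{smallmatrix}\right)$ for iv), reads off a left Bezout identity $au_i+bv_i=1$ from the left inverse for v), and runs the same degree induction for vi) (where you in fact supply the justification that the leading terms of $v_{i-1}$ and $v_iq_{i+1}$ cannot cancel, a point the paper merely asserts). The only blemish is the off-by-one in your base case for i) (the convention $r_{-1}=f$ clashes with the algorithm's initialization $r_0=f$, $r_1=g$), but that index mismatch is already present between items i) and ii)--iii) of the statement itself, so it is not a defect introduced by your argument.
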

\begin{proof}
$i)$, $ii)$ and $iii)$ may be proven similarly to \cite[Lemma 3.8 $i)$, $ii)$ and $iv)$]{VonzurGathen}.  For $iv)$, observe that $T_i=\left (\begin{smallmatrix} q_i & 1 \\ 1 & 0 \end{smallmatrix}\right )$ is a left and right inverse of $Q_i$. So $S_i=T_i\cdots T_1R_0$  is a left and right inverse of $R_i$. Finally, for $v)$, if $S_i=\left (\begin{smallmatrix} a & b \\ c & d \end{smallmatrix}\right )$,
$$\left (\begin{matrix} a & b \\ c & d \end{matrix}\right )\left (\begin{matrix} u_i & u_{i+1} \\ v_{i} & v_{i+1} \end{matrix}\right )=\left (\begin{matrix} 1 & 0 \\ 0 & 1 \end{matrix}\right ),$$
so there exist $a,b\in D[x;\sigma]$ verifying $au_i+bv_i=1$. Thus $(u_i,v_i)_r=1$.

$vi)$. For $i=1$, $r_0=f$ and $v_1=1$, so the equality holds. Note that $\deg r_i < r_{i-1}$ and $\deg v_{i-1} < v_{i}$ for any $i > 1$. Then, since $r_{i+1}=r_{i-1}-r_iq_i$ and $v_{i+1}=v_{i-1}-v_iq_i$ for any $i$, $\deg r_{i-1}=\deg r_{i}+\deg q_i$ and $\deg v_{i+1}=\deg v_i+\deg q_i$ for any $i$. Now, by the induction hypothesis, $\deg f=\deg r_{i-1}+\deg v_i=\deg r_i+\deg q_i+\deg v_{i+1}-\deg q_i=\deg r_i+\deg v_{i+1}$. 
\end{proof}

From now on, set $R = \mathbb{F}(t)[x;\sigma]$.  The \emph{lef evaluation} of a non--commuative polynomial $g \in R$ at $\alpha \in \mathbb{F}(t)$ is the remainder of the left division of $g$ by $x-\alpha$, and similarly for the \emph{right evaluation}. These evaluations allows to speak of left and right roots of non-commutative polynomials. Their properties in a general setting were studied in \cite{Lam/Leroy:1988}.

\begin{lemma}\label{eval}
Let $\gamma\in \mathbb{F}(t)$ and $g=\sum_{i=1}^{r}g_ix^i\in R$. Then:
\begin{enumerate}[$i)$]
\item The remainder of the left division of $g$ by $x-\gamma$ is $\sum_{i=0}^rg_iN_i(\gamma)$
\item The remainder of the right division of $g$ by $x-\gamma$ is $\sum_{i=0}^r\sigma^{-i}(g_i)N_{-i}(\gamma)$
\item $N_j(\sigma^k(\gamma))=\sigma^k(N_j(\gamma))$ for any $i,k$.
\end{enumerate}
\end{lemma}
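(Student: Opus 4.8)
The plan is to prove Lemma \ref{eval} by direct computation, handling the three parts separately. The key observation is that all three statements reduce to understanding how powers $x^i$ behave under division by $x - \gamma$, which is governed precisely by the norm $N_i(\gamma)$.

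For part $i)$, I would proceed by induction on $i$ to show that $x^i \equiv N_i(\gamma) \pmod{R(x-\gamma)}$ from the right, i.e.\ that the remainder of the left division of $x^i$ by $x-\gamma$ is $N_i(\gamma)$. The base case $i=0$ gives $N_0(\gamma)=1$, and $i=1$ gives $x = (x-\gamma) + \gamma$, so the remainder is $\gamma = N_1(\gamma)$. For the inductive step, writing $x^i = (x-\gamma)h_i(x) + N_i(\gamma)$ for some $h_i \in R$, I would multiply on the \emph{left} by $x$ to get $x^{i+1} = x(x-\gamma)h_i(x) + x N_i(\gamma) = x(x-\gamma)h_i(x) + \sigma(N_i(\gamma))x$. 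The term $\sigma(N_i(\gamma))x$ must then be reduced: $\sigma(N_i(\gamma))x = \sigma(N_i(\gamma))(x-\gamma) + \sigma(N_i(\gamma))\gamma$, and since $\sigma(N_i(\gamma))\gamma = \gamma\sigma(\gamma)\cdots\sigma^i(\gamma) = N_{i+1}(\gamma)$ (using the definition $N_{i+1}(\gamma) = \gamma\,\sigma(N_i(\gamma))$... more precisely $N_{i+1}(\gamma)=\gamma\sigma(\gamma)\cdots\sigma^i(\gamma)$ and $\sigma(N_i(\gamma))=\sigma(\gamma)\cdots\sigma^i(\gamma)$, so $\gamma\sigma(N_i(\gamma))=N_{i+1}(\gamma)$), we obtain the remainder $N_{i+1}(\gamma)$. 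Linearity over $\mathbb{F}(t)$ (coefficients on the left) then gives the full statement $\sum_i g_i N_i(\gamma)$ for $g = \sum_i g_i x^i$. The one subtlety here, which I expect to be the main point requiring care, is bookkeeping the left/right sidedness: the remainder of \emph{left} division sits on the left of the quotient times $(x-\gamma)$, and pushing an $x$ past a field element introduces a $\sigma$, so one must verify the norm reassembles correctly rather than picking up an off-by-one twist in the $\sigma$-powers.

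For part $ii)$, I would run the symmetric argument for \emph{right} division: the remainder of the right division of $x^i$ by $x-\gamma$ should be computed by writing $x^i = h_i(x)(x-\gamma) + c_i$ and inducting, this time multiplying on the \emph{right} by $x$ and reducing $c_i x$. Since $c_i \in \mathbb{F}(t)$, one has $c_i x = \sigma^{-1}(\sigma(c_i)x)$... rather, to express $c_i x$ in the form $(\text{something})(x-\gamma) + (\text{constant})$ one writes $c_i x = c_i(x-\gamma) + c_i\gamma$ directly, which is already reduced, so the remainder of the right division of $x \cdot c_i$ is just $c_i \gamma$; but the recursion here is on $x \cdot x^i$ rather than $x^i \cdot x$, so I would instead set up the induction as $x^{i} = h_i(x)(x-\gamma)+c_i$ and multiply on the left by $x$: $x^{i+1} = xh_i(x)(x-\gamma) + xc_i = xh_i(x)(x-\gamma) + \sigma(c_i)x = xh_i(x)(x-\gamma)+\sigma(c_i)(x-\gamma)+\sigma(c_i)\gamma$, giving remainder $\sigma(c_i)\gamma$. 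Tracking this recursion with $c_0 = 1$ yields $c_i$ as a product involving $\gamma, \sigma^{-1}(\gamma),\dots$ after reindexing, and matching it against $N_{-i}$ and the $\sigma^{-i}(g_i)$ factors in the claimed formula is the routine-but-delicate part; one reconciles $\sigma^{-i}(g_i)N_{-i}(\gamma)$ with $c_i$ by noting $g_i x^i$ contributes $g_i c_i$ and $g_i c_i = \sigma^{-i}(g_i)\cdot(\text{appropriate norm})$ only after the $\sigma$-twists from pushing $g_i$ through are accounted for --- actually cleaner is to expand $g_i x^i$ and push $g_i$ to meet the constant term, producing exactly $\sigma^{-i}$ applied to $g_i$ times $N_{-i}(\gamma)$.

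For part $iii)$, the identity $N_j(\sigma^k(\gamma)) = \sigma^k(N_j(\gamma))$ is purely formal: $N_j(\sigma^k(\gamma)) = \sigma^k(\gamma)\,\sigma(\sigma^k(\gamma))\cdots\sigma^{j-1}(\sigma^k(\gamma)) = \sigma^k(\gamma)\,\sigma^{k+1}(\gamma)\cdots\sigma^{k+j-1}(\gamma) = \sigma^k\big(\gamma\,\sigma(\gamma)\cdots\sigma^{j-1}(\gamma)\big) = \sigma^k(N_j(\gamma))$, using that $\sigma$ is a ring automorphism so it distributes over the product. The same one-line computation works for negative $j$ with $N_{-j}$. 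So part $iii)$ requires no real work. Overall the only genuine obstacle is the careful $\sigma$-power bookkeeping in parts $i)$ and $ii)$; everything else is formal manipulation, and I would keep the two inductions tightly parallel to minimize the risk of a sign-of-$\sigma$ error.
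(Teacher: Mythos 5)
The paper itself gives no computation for parts $i)$ and $ii)$ — it simply cites \cite[Lemma 2.4]{Lam/Leroy:1988} — so your self-contained induction is a genuinely different and more elementary route, and your part $iii)$ coincides with the paper's one-line argument. The inductive skeleton is also the right one: the whole lemma rests on the recursions $N_{i+1}(\gamma)=\sigma(N_i(\gamma))\,\gamma$ and $N_{-(i+1)}(\gamma)=\gamma\,\sigma^{-1}(N_{-i}(\gamma))$, and you identify the first of these correctly.

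There is, however, a genuine sidedness error that breaks part $ii)$ as written. In this paper's (Jacobson's) convention the \emph{left} division of $g$ by $x-\gamma$ is $g=q(x-\gamma)+r$, so the remainder is taken modulo the left ideal $R(x-\gamma)$; your congruence $x^i\equiv N_i(\gamma)\pmod{R(x-\gamma)}$ says exactly this, but the decomposition you then write, $x^i=(x-\gamma)h_i(x)+N_i(\gamma)$, puts the divisor on the wrong side, and after left-multiplying by $x$ the term $x(x-\gamma)h_i(x)$ need not lie in $R(x-\gamma)$, so the induction does not close. For part $i)$ this is a fixable transposition: write $x^i=h_i(x)(x-\gamma)+N_i(\gamma)$ and your reduction of $xN_i(\gamma)$ goes through verbatim. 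For part $ii)$ the issue is substantive: the right-division remainder lives modulo the \emph{right} ideal $(x-\gamma)R$, so one must multiply on the right by $x$ and commute $x$ past the constant, $c_i x=x\sigma^{-1}(c_i)=(x-\gamma)\sigma^{-1}(c_i)+\gamma\sigma^{-1}(c_i)$, giving $c_{i+1}=\gamma\sigma^{-1}(c_i)$ and hence $c_i=N_{-i}(\gamma)$. The recursion you actually commit to, $x^{i+1}=xh_i(x)(x-\gamma)+\sigma(c_i)(x-\gamma)+\sigma(c_i)\gamma$, reduces modulo $R(x-\gamma)$ again and yields $c_{i+1}=\sigma(c_i)\gamma=N_{i+1}(\gamma)$: it re-derives part $i)$, and no reindexing will turn $N_i$ into $N_{-i}$. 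Relatedly, the factor $\sigma^{-i}(g_i)$ arises precisely because left multiplication by $g_i$ does not preserve $(x-\gamma)R$; the correct move is the ``cleaner'' one you mention only in passing, namely to rewrite $g_ix^i=x^i\sigma^{-i}(g_i)$ first and then reduce, so that $g_ix^i\equiv N_{-i}(\gamma)\sigma^{-i}(g_i)=\sigma^{-i}(g_i)N_{-i}(\gamma)$ modulo $(x-\gamma)R$. With these corrections your proof is complete and arguably more informative than the paper's citation.
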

\begin{proof}
 $i)$ and $ii)$ are deduced from \cite[Lemma 2.4]{Lam/Leroy:1988}. \\
$iii)$. $N_j(\sigma^k(\gamma))=\sigma^k(\gamma)\sigma^{k+1}(\gamma)\cdots \sigma^{k+j-1}(\gamma)=\sigma^k(\gamma\sigma(\gamma)\cdots \sigma^{j-1}(\gamma))=\sigma^k(N_j(\gamma))$
\end{proof}

\begin{lemma}\label{locatorandp_j}
Let $\{t_1<t_2<\cdots <t_m\}\subseteq \{0,1,\ldots ,n-1\}$ with $m >1$, and $q=\lcrm{1-\sigma^{t_1}(\beta)x,1-\sigma^{t_2}(\beta)x, \ldots ,1-\sigma^{t_m}(\beta)x}$. Let $q_1,\ldots, q_m\in R$ such that $q=(1-\sigma^{t_j}(\beta)x)q_j$ for any $1\leq j\leq m$.
Then:
\begin{enumerate}[$i)$]
\item $\lclm{q_1,q_2,\ldots , q_m} = q$ and $\gcrd{q_1,q_2,\ldots , q_m} = 1$. 
\item  \(R/Rq = \bigoplus_{j=1}^m Rq_j/Rq\).
\item For any $f\in R$ with $\deg f<m$ there exist $a_1,\ldots ,a_m\in \mathbb{F}(t)$ such that $f=\sum_{j=1}^m a_jq_j$.
\item The set \(\{q_1, \dots, q_m\}\) gives modulo $Rq$ a basis of \(R/Rq\) as an $\mathbb{F}(t)$-vector space.
\end{enumerate}
\end{lemma}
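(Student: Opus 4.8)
The plan is to reduce all four items to a single statement: that $q_1,\dots,q_m$ are $\mathbb{F}(t)$-linearly independent in $R$. First I record the degree bookkeeping. By Lemma \ref{bothlcrm} we may write $q=\lcrm{x-\sigma^{t_1-1}(\beta^{-1}),\dots,x-\sigma^{t_m-1}(\beta^{-1})}$, whose degree is $m$ by Lemma \ref{degreelclm}; hence $\deg q_j=m-1$ for every $j$, so $q_j\notin Rq$, and $\dim_{\mathbb{F}(t)}R/Rq=m$. Also each $q_j$ is a right divisor of $q$, so $Rq\subseteq Rq_j$.

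The heart of the argument is the linear independence, which I would prove by passing to the skew power series ring $S=\mathbb{F}(t)[[x;\sigma]]$, which (like $R$) is a domain, exactly as in the proof of the key equation. There $(1-\sigma^{t_j}(\beta)x)^{-1}=\sum_{i\ge 0}N_i(\sigma^{t_j}(\beta))x^i$, hence $q_j=\big(\sum_{i\ge 0}N_i(\sigma^{t_j}(\beta))x^i\big)q$, and for $a_1,\dots,a_m\in\mathbb{F}(t)$,
\[
\sum_{j=1}^m a_j q_j \;=\; \Big(\sum_{i\ge 0}\Big(\sum_{j=1}^m a_j N_i(\sigma^{t_j}(\beta))\Big)x^i\Big)\,q .
\]
If the left side is $0$, then since $q$ is a nonzero element of the domain $S$ every coefficient $\sum_j a_j N_i(\sigma^{t_j}(\beta))$ vanishes; the equations for $0\le i\le m-1$ form a homogeneous linear system in the $a_j$ whose matrix is precisely the matrix $M$ from the proof of Lemma \ref{degreelclm}, which is invertible by Lemma \ref{circulantlemma} (factor $\sigma^{t_j}(\alpha^{-1})$ out of column $j$ and use that $\{\sigma^s(\alpha)\}_s$ is a normal basis). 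Hence all $a_j=0$.

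The four items now follow quickly. Since every $q_j$ has degree $<m=\deg q$, a relation $\sum a_j q_j\in Rq$ forces $\sum a_j q_j=0$, so the classes $\bar q_1,\dots,\bar q_m$ are linearly independent in $R/Rq$; being $m=\dim_{\mathbb{F}(t)}R/Rq$ in number, they are a basis, which is $iv)$. For $iii)$, given $f$ with $\deg f<m$ write $\bar f=\sum a_j\bar q_j$; then $f-\sum a_j q_j\in Rq$ has degree $<\deg q$, hence is $0$. For $ii)$, the exact sequence $0\to Rq_j/Rq\to R/Rq\to R/Rq_j\to 0$ gives $\dim_{\mathbb{F}(t)}Rq_j/Rq=1$, so $Rq_j/Rq=\mathbb{F}(t)\bar q_j$, and summing over $j$ and invoking $iv)$ yields $R/Rq=\bigoplus_j Rq_j/Rq$. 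For $i)$: $\lclm{q_1,\dots,q_m}$ is a right divisor of $q$ of degree $m-1$ or $m$; degree $m-1$ would make all $Rq_j$ equal, whence $(1-\sigma^{t_1}(\beta)x)u=1-\sigma^{t_2}(\beta)x$ for a unit $u\in\mathbb{F}(t)$, so $u=1$ and $\sigma^{t_1}(\beta)=\sigma^{t_2}(\beta)$, contradicting Lemma \ref{degreelclm} applied to $\{t_1,t_2\}$; hence the degree is $m$ and $\lclm{q_1,\dots,q_m}=q$. Finally $Rq_1+\dots+Rq_m\supseteq Rq$ and, by $ii)$, equals $R$ modulo $Rq$, so $Rq_1+\dots+Rq_m=R$, i.e.\ $\gcrd{q_1,\dots,q_m}=1$.

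The only genuinely non-formal step is the linear independence in the second paragraph; the rest is bookkeeping with degrees and vector-space dimensions. The points to be careful about are that $\lclm{-}$ and $\lcrm{-}$ are defined only up to a one-sided unit factor (so the equalities in $i)$ are up to normalization), and the use of $S$ being a domain so that right multiplication by $q$ is injective.
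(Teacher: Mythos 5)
Your proof is correct, but it runs in the opposite direction to the paper's. The paper proves $i)$ first: the $q_j$ are pairwise non-left-associated polynomials of degree $m-1$, so $\lclm{q_1,\dots,q_m}$ has degree at least $m$, and since it right-divides $q$ (of degree $m$) it equals $q$; the claim $\gcrd{q_1,\dots,q_m}=1$ is then used, together with the isomorphism $Rq_j/Rq\cong R/R(1-\sigma^{t_j}(\beta)x)$ and a dimension count, to get the direct sum in $ii)$, from which $iii)$ and $iv)$ follow. You instead first establish the $\mathbb{F}(t)$-linear independence of $q_1,\dots,q_m$ by an explicit computation in $\mathbb{F}(t)[[x;\sigma]]$, reducing to the nonsingularity of the matrix $\bigl(N_i(\sigma^{t_j}(\beta))\bigr)$ — literally the matrix $M$ from the proof of Lemma \ref{degreelclm}, handled there via Lemma \ref{circulantlemma} — and then deduce $iv)\Rightarrow iii)\Rightarrow ii)\Rightarrow i)$ by degree and dimension bookkeeping. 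This buys something concrete: $\gcrd{q_1,\dots,q_m}=1$ comes out as a consequence of the direct-sum decomposition rather than being taken as an input, whereas the paper's proof of $i)$ never actually justifies it (the identity $\deg\lclm{f,g}=\deg f+\deg g-\deg\gcrd{f,g}$ does not extend verbatim to $m>2$ factors, so the gcrd statement does not follow from the lclm computation alone). The cost is the detour through the skew power series ring, which is harmless since the paper already uses that ring and the expansion $(1-ax)^{-1}=\sum_{i\ge 0}N_i(a)x^i$ in the proof of the key equation. Your closing caveats (normalization of $\lclm{-}$ and $\gcrd{-}$ up to units, injectivity of right multiplication by $q$ in a domain) are exactly the right ones.
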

\begin{proof}
$i)$ By Lemmas \ref{degreelclm} and \ref{bothlcrm}, $\deg q=m$ and, thus,  $\deg q_j=m-1$ for any $j=1,\ldots ,m$.  Since $m >1$,  the degree of $[q_1, \dots,q_m]_\ell$ must be at least $m-1+1 = m$. But $q$ is obviously a left common multiple of $q_1, \dots, q_m$, whence $q =[q_1, \dots,q_m]_\ell$.  \\
$ii)$ Since   \(R q \subseteq R q_j\) for all \(1 \leq j \leq m\) and $\gcrd{q_1,q_2,\ldots , q_m} = 1$, we get \(R/Rq = \sum_{j=1}^m Rq_j/Rq\). Observe that \(Rq_j/Rq \cong R/R(1-\sigma^{t_j}(\beta)x)\) is one-dimensional over $\mathbb{F}(t)$. Since the dimension of $R/Rq$ as an $\mathbb{F}(t)$--vector space is $\deg q = m$, we get  the direct sum.\\
 $iii)$ and  $iv)$ follow from $ii)$.

\end{proof}

\begin{proposition}\label{cyclicgenerator}
The following statements are equivalent:
\begin{enumerate}[$i)$]
\item \(\gcrd{\omega,\lambda} = 1\).
\item \(\omega + R\lambda\) generates \(R/R\lambda\) as left \(R\)--module.
\item The set  $\{ x^i(\omega + R\lambda) ~|~ 0 \leq i \leq \nu-1 \}$ is linearly independent over $\mathbb{F}(t)$. 
\end{enumerate}
\end{proposition}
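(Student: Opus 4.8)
The plan is to route everything through the finite-dimensional $\mathbb{F}(t)$-vector space $R/R\lambda$. By Lemma~\ref{bothlcrm} together with Lemma~\ref{degreelclm} one has $\deg\lambda=\nu$, so left division by $\lambda$ gives $\dim_{\mathbb{F}(t)}R/R\lambda=\nu$, with basis $\{1+R\lambda,\,x+R\lambda,\,\dots,\,x^{\nu-1}+R\lambda\}$.

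To prove $(i)\Leftrightarrow(ii)$, observe that the left $R$-submodule of $R/R\lambda$ generated by $\omega+R\lambda$ is $(R\omega+R\lambda)/R\lambda$. Every left ideal of $R$ is principal, and $R\omega+R\lambda=Rd$ with $d=\gcrd{\omega,\lambda}$ (as recalled in the Appendix). Hence this submodule equals all of $R/R\lambda$ exactly when $Rd=R$, i.e.\ when $d$ is a unit of $R$, i.e.\ when $\gcrd{\omega,\lambda}=1$.

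The core of the argument is $(ii)\Leftrightarrow(iii)$. Let $M=R(\omega+R\lambda)$; writing an arbitrary $p\in R$ as an $\mathbb{F}(t)$-combination of powers of $x$ shows that, as an $\mathbb{F}(t)$-subspace, $M$ is spanned by $\{x^i(\omega+R\lambda):i\ge 0\}$. The key claim is that the first $\nu$ of these already span $M$. Put $W_k=\sum_{i=0}^{k}\mathbb{F}(t)\,x^i(\omega+R\lambda)$. From $x\gamma=\sigma(\gamma)x$ and $\sigma(\mathbb{F}(t))=\mathbb{F}(t)$ we get $xW_k=\sum_{i=1}^{k+1}\mathbb{F}(t)\,x^i(\omega+R\lambda)$, hence $W_{k+1}=W_k+xW_k$; in particular, if $W_{k+1}=W_k$ then $xW_k\subseteq W_k$ and $W_j=W_k$ for every $j\ge k$. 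Since $\dim W_0\le 1$, $\dim W_{k+1}\le\dim W_k+1$, and $\dim W_k\le\dim R/R\lambda=\nu$, the increasing chain $W_0\subseteq W_1\subseteq\cdots$ must become constant at some step $\le\nu-1$, so $M=W_{\nu-1}$. Now $(ii)$ says $M=R/R\lambda$; a $\nu$-dimensional space spanned by the $\nu$ vectors $\omega+R\lambda,\,x(\omega+R\lambda),\,\dots,\,x^{\nu-1}(\omega+R\lambda)$ coincides with $R/R\lambda$ precisely when these vectors are linearly independent, which is $(iii)$. Conversely, if $(iii)$ holds these $\nu$ vectors form a basis of $M=W_{\nu-1}$, so $\dim M=\nu$ and $M=R/R\lambda$, i.e.\ $(ii)$.

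I expect the only delicate point to be the bookkeeping in the key claim: one must check that left multiplication by $x$ sends the $\mathbb{F}(t)$-subspace $W_k$ to an $\mathbb{F}(t)$-subspace (harmless, since $\sigma$ fixes $\mathbb{F}(t)$ setwise) and that the chain stabilizes no later than step $\nu-1$. Should the direct span computation become unwieldy, an alternative route to $(ii)\Leftrightarrow(iii)$ is to factor $\omega=\omega'd$ and $\lambda=\lambda'd$ with $d=\gcrd{\omega,\lambda}$ and $\gcrd{\omega',\lambda'}=1$, write $\lclm{\omega',\lambda'}=\omega''\omega'$ so that $\deg\omega''=\deg\lambda'=\nu-\deg d$, and identify $M\cong R/R\omega''$ via $p\mapsto p\omega+R\lambda$ (whose kernel is $R\omega''$, using that $R$ is a domain); then $\dim_{\mathbb{F}(t)}M=\nu-\deg d$ makes both $(ii)$ and $(iii)$ equivalent to $\deg d=0$, hence to $(i)$.
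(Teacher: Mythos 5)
Your proof is correct and follows essentially the same route as the paper: $(i)\Leftrightarrow(ii)$ via Bezout (the submodule generated by $\omega+R\lambda$ is $Rd/R\lambda$ with $d=\gcrd{\omega,\lambda}$), and $(ii)\Leftrightarrow(iii)$ by comparing $\nu$ spanning vectors against $\dim_{\mathbb{F}(t)}R/R\lambda=\nu$. The only difference is that you spell out, via the stabilizing chain $W_0\subseteq W_1\subseteq\cdots$, why the first $\nu$ powers $x^i(\omega+R\lambda)$ already span the cyclic submodule — a step the paper dismisses as clear.
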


\begin{proof}
The equivalence between $i)$ and $ii)$ is a direct consequence of Bezout's Theorem. It is clear that $\omega + R\lambda$ generates the left $R$--module $R/R\lambda$  if and only if  $\{ x^i(\omega + R\lambda) ~|~ 0 \leq i \leq \nu-1 \}$ spans $R/R\lambda$ as an $\mathbb{F}(t)$--vector space. Since the dimension over $\mathbb{F}(t)$ of $R/R\lambda$ is $\nu$, the equivalence between $ii)$ and $iii)$ becomes clear. 
\end{proof}

\begin{lemma}\label{coordinates}
The $j$-coordinate of \(x^i \omega + R\lambda\) with respect to $\{p_1, \dots, p_\nu\}$ is $\sigma^i(e_j)\sigma^{k_j}(\alpha)$, for any $1\leq j \leq \nu$\end{lemma}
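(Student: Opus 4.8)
The plan is to compute the coordinates of $x^i\omega + R\lambda$ directly from the definition of $\omega$ and the structure of the decomposition $R/R\lambda = \bigoplus_{j=1}^\nu Rp_j/R\lambda$ given by Lemma \ref{locatorandp_j}\,$iv)$. Recall that $\omega = \sum_{j=1}^\nu e_j\sigma^{k_j}(\alpha)p_j$ by definition, so $\omega + R\lambda$ already has coordinates $e_j\sigma^{k_j}(\alpha) = \sigma^0(e_j)\sigma^{k_j}(\alpha)$; this settles the case $i=0$. The task is therefore to understand how multiplication on the left by $x$ acts on the coordinates, and then iterate.

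First I would record the key commutation identity. Since $\lambda = (1-\sigma^{k_j}(\beta)x)p_j$, writing $p_j = \sum_\ell p_{j,\ell}x^\ell$, left multiplication by $x$ sends $p_j$ to $xp_j = \sum_\ell \sigma(p_{j,\ell})x^{\ell+1}$. The point is to re-express $xp_j$ modulo $R\lambda$ in terms of the basis $\{p_1,\dots,p_\nu\}$. The cleanest route is to use the skew power series identity from the key equation theorem: in $\mathbb{F}(t)[[x;\sigma]]$ one has $p_j = (1-\sigma^{k_j}(\beta)x)^{-1}\lambda = \sum_{h\geq 0}N_h(\sigma^{k_j}(\beta))x^h\lambda$, equivalently $\sum_{h\geq 0}\sigma^{k_j}(\alpha^{-1}\sigma^h(\alpha))x^h\lambda = \sigma^{k_j}(\alpha^{-1})\sum_{h\geq 0}\sigma^{k_j+h}(\alpha)x^h\lambda$. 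Then for the element $x^i\omega$ one computes
\[
x^i\omega = \sum_{j=1}^\nu \sigma^i(e_j)\,\sigma^i\!\big(\sigma^{k_j}(\alpha)\big)\,x^i\!\!\left(\sigma^{k_j}(\alpha^{-1})\sum_{h\geq 0}\sigma^{k_j+h}(\alpha)x^h\right)\!\lambda
\]
after pushing $x^i$ past the coefficients of $\omega = \sum_j e_j\sigma^{k_j}(\alpha)p_j$; re-indexing the inner sum and collecting the term coming from $h$ such that the total $x$-power is small enough, one isolates $\sigma^i(e_j)\sigma^{k_j}(\alpha)$ as the coefficient of $p_j$ modulo $R\lambda$. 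This mirrors exactly the computation already carried out in the proof of the non-commutative key equation, only truncating at a different point.

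Alternatively, and perhaps more transparently, I would prove it by induction on $i$: assuming the $j$-th coordinate of $x^i\omega + R\lambda$ is $\sigma^i(e_j)\sigma^{k_j}(\alpha)$, write $x^i\omega \equiv \sum_j \sigma^i(e_j)\sigma^{k_j}(\alpha)p_j \pmod{R\lambda}$, multiply on the left by $x$, and use the single identity $x\,\sigma^{k_j}(\alpha)p_j \equiv \sigma(\sigma^{k_j}(\alpha))\,xp_j \pmod{R\lambda}$ together with the fact that $x p_j \equiv \sigma^{k_j}(\beta)^{-1}\cdot(\text{something})$... more precisely, from $\lambda = p_j - \sigma^{k_j}(\beta)xp_j$ one gets $\sigma^{k_j}(\beta)xp_j \equiv p_j \pmod{R\lambda}$, hence $xp_j \equiv \sigma^{k_j}(\beta^{-1})p_j \pmod{R\lambda}$. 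Therefore $x\big(\sigma^i(e_j)\sigma^{k_j}(\alpha)p_j\big) = \sigma^{i+1}(e_j)\,\sigma^{k_j+1}(\alpha)\,xp_j \equiv \sigma^{i+1}(e_j)\,\sigma^{k_j+1}(\alpha)\sigma^{k_j}(\beta^{-1})p_j \pmod{R\lambda}$, and since $\beta = \alpha^{-1}\sigma(\alpha)$ gives $\sigma^{k_j}(\beta^{-1}) = \sigma^{k_j}(\alpha)\sigma^{k_j+1}(\alpha^{-1})$, the product $\sigma^{k_j+1}(\alpha)\sigma^{k_j}(\beta^{-1}) = \sigma^{k_j}(\alpha)$ collapses exactly, yielding $\sigma^{i+1}(e_j)\sigma^{k_j}(\alpha)$ as required.

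The main obstacle is purely bookkeeping: making sure that when we reduce modulo $R\lambda$ the various representatives stay within the span of $\{p_1,\dots,p_\nu\}$ and that the coordinate extraction is legitimate — this is guaranteed by Lemma \ref{locatorandp_j}\,$iv)$ (the $p_j$ form an $\mathbb{F}(t)$-basis of $R/R\lambda$), so the decomposition is unique and the coordinates are well-defined. A small point to handle carefully is that left multiplication by $x$ on coefficients introduces $\sigma$, which is why the error values $e_j$ pick up the power $\sigma^i$ while the $\sigma^{k_j}(\alpha)$ factor is regenerated unchanged at each step thanks to the identity $\beta = \alpha^{-1}\sigma(\alpha)$; verifying this cancellation is the crux, and it is a one-line computation once set up correctly. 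No genuine difficulty is expected beyond keeping the $\sigma$-exponents straight.
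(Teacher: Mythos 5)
Your second (induction) argument is correct and is essentially the paper's own proof unrolled one step at a time: the paper establishes $x^i p_j \equiv N_i(\sigma^{k_j}(\beta^{-1}))p_j \pmod{R\lambda}$ in one shot (via right evaluation of $x^i - N_i(\sigma^{k_j}(\beta^{-1}))$ at $\sigma^{k_j}(\beta^{-1})$) and then telescopes $N_i(\beta^{-1})=\alpha\sigma^i(\alpha^{-1})$, which is exactly what your single-step identity $xp_j\equiv\sigma^{k_j}(\beta^{-1})p_j$ together with the cancellation $\sigma^{k_j+1}(\alpha)\sigma^{k_j}(\beta^{-1})=\sigma^{k_j}(\alpha)$ yields upon iteration. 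The appeal to Lemma \ref{locatorandp_j} for well-definedness of the coordinates matches the paper as well, so no substantive difference remains.
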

\begin{proof}
First note that $R(1-\sigma^{t_j}(\beta)x)=R(x-\sigma^{t_j}(\beta^{-1}))$ for $j=1,\ldots ,m$. By Lemma \ref{eval}, $\sigma^{t_j}(\beta^{-1})$ is a right root of $x^i-N_i(\sigma^{t_j}(\beta^{-1}))$. Then $x^i-N_i(\sigma^{t_j}(\beta^{-1}))\in R(1-\sigma^{t_j}(\beta)x)$. Multiplying on the right by $p_j$, $x^ip_j-N_i(\sigma^{t_j}(\beta^{-1}))p_j\in R\lambda$. Thus, in $R/R\lambda$,
\[x^i \omega = \sum_{j=1}^\nu x^i e_j\sigma^{k_j}(\alpha) p_j = \sum_{j=1}^\nu \sigma^i(e_j) \sigma^{k_j+i}(\alpha)x^i p_j=\sum_{j=1}^\nu \sigma^i(e_j) \sigma^{k_j+i}(\alpha)N_i(\sigma^{k_j}(\beta^{-1})) p_j.
\]
Now, 
$$\sigma^i(e_j) \sigma^{k_j+i}(\alpha)N_i(\sigma^{k_j}(\beta^{-1}))=\sigma^i(e_j) \sigma^{k_j+i}(\alpha)\sigma^{k_j}(N_i(\beta^{-1})))=\sigma^i(e_j) \sigma^{k_j+i}(\alpha)\sigma^{k_j}(\alpha\sigma^i(\alpha^{-1}))=\sigma^i(e_j)\sigma^{k_j}(\alpha),$$
and the result follows.
\end{proof}


\end{document}